\renewcommand{\baselinestretch}{1.2}
\newcommand{\mc}{\mathcal}
\newcommand{\DEF}{\sl}
\newcommand{\tw}{\mathrm{tw}} 
\newcommand{\wt}{\widetilde}
\newcommand{\trc}{t} 
\newcommand{\brc}{b} 
\newcommand{\mxc}{q} 
\newtheorem{theorem}{Theorem}[section]
\newtheorem{lemma}[theorem]{Lemma}
\newtheorem{claim}[theorem]{Claim}
\newtheorem{proposition}[theorem]{Proposition}
\title{Hitting Diamonds and Growing Cacti}
\thanks{
A preliminary version of the paper will appear in the proceedings of the 
14th Conference on Integer Programming and Combinatorial Optimization (IPCO 2010). 
This work was supported by the ``Actions de Recherche Concert\'ees'' (ARC) fund of the ``Communaut\'e fran\c{c}aise de Belgique''. G.J. is a Postdoctoral Researcher of the ``Fonds National de la Recherche Scientifique'' (F.R.S.--FNRS). 
This work was done while 
U.P.~was at  D\'epartement de Math\'ematique - Universit\'e Libre de Bruxelles as a Postdoctoral Researcher of the F.R.S.--FNRS}
\author{Samuel Fiorini}
\address{\newline D\'epartement de Math\'ematique
\newline Universit\'e Libre de Bruxelles
\newline Brussels, Belgium}
\email{sfiorini@ulb.ac.be}
\author{Gwena\"el Joret}
\address{\newline D\'epartement d'Informatique
\newline Universit\'e Libre de Bruxelles
\newline Brussels, Belgium}
\email{gjoret@ulb.ac.be}
\author{Ugo Pietropaoli}
\address{\newline Dipartimento di Ingegneria dell'Impresa 
\newline Universit\`a di Roma ``Tor Vergata'' 
\newline Rome, Italy}
\email{pietropaoli@disp.uniroma2.it}  
\date{\today}
\begin{document}

\maketitle

\begin{abstract} 
We consider the following NP-hard problem: in a weighted graph, find a minimum cost set of vertices whose removal leaves a graph in which no two cycles share an edge. 
We obtain a constant-factor approximation algorithm, based on the primal-dual method. Moreover, we show that the integrality gap of the natural LP relaxation of the problem is $\Theta(\log n)$, where $n$ denotes the number of vertices in the graph.
\end{abstract} 

{\small {\sc Keywords}: Approximation algorithm, primal-dual method, covering problem.} 

\section{Introduction}
\label{intro}

Graphs in this paper are finite, undirected, and may contain parallel edges but no loops.
We study the following combinatorial optimization problem: given a vertex-weighted graph, remove 
a minimum cost subset of vertices so that all the cycles in the resulting graph 
are edge-disjoint. 
We call this problem  the {\DEF diamond hitting set problem}, because it
is equivalent to covering all subgraphs which are diamonds with a minimum
cost subset of vertices, where a {\DEF diamond} is
any subdivision of the graph consisting of three parallel edges.

The diamond hitting set problem can be thought of as a 
generalization of the vertex cover and feedback vertex set problems: 
Suppose you wish to remove a minimum cost subset of vertices so 
that the resulting graph has no pair of vertices linked by $k$ internally disjoint paths. 
Then, for $k=1$ and $k=2$, this is respectively the vertex cover problem and feedback vertex set problem, 
while for $k=3$ this corresponds to the diamond hitting set problem.

It is well-known that both the vertex cover and feedback vertex set problems
admit constant-factor approximation algorithms\footnote{A $\rho$-approximation algorithm for a minimization problem is an algorithm that runs in polynomial time and outputs a feasible solution whose cost is no more than $\rho$ times the cost of the optimal solution. The number $\rho$ is called the approximation factor.}. 
Hence, it is natural to ask whether the same is true for the 
diamond hitting set problem. The main contribution of this paper is a positive answer to this question.

\subsection{Background and Related Work}

Although there exists a simple $2$-approximation algorithm for the vertex cover problem, there is strong evidence that approximating the problem with a factor of $2 - \varepsilon$ might be hard, for every 
$\varepsilon > 0$ \cite{KR08}. 
It should be noted that the feedback vertex set and diamond hitting set problems are at least
as hard to approximate as the vertex cover problem, in the sense that the existence of a 
$\rho$-approximation algorithm for one of these two problems implies the existence of a $\rho$-approximation algorithm for the vertex cover problem, where $\rho$ is a constant.

Concerning the feedback vertex set problem, the first approximation algorithm is due to Bar-Yehuda, Geiger, Naor, and Roth \cite{bgnr98} and its approximation factor is $O(\log n)$. Later, $2$-approximation algorithms have been proposed by Bafna, Berman, and Fujito \cite{bbf99}, and Becker and Geiger \cite{bg96}. Chudak, Goemans, Hochbaum and Williamson~\cite{cghw98} showed that these algorithms can be seen as deriving from the primal-dual method (see for instance~\cite{ps82, gw97}). Starting with an integer programming formulation of the problem, these algorithms simultaneously construct a feasible integral solution and a feasible dual solution of the linear programming relaxation, such that the values of these two solutions are within a constant factor of each other.

These algorithms also lead to a characterization of the integrality gap\footnote{The integrality gap of an integer programming formulation is the worst-case ratio between the optimum value of the integer program and the optimum value of its linear relaxation.} of two different integer programming formulations of the problem, as we now explain. Let $\mathcal{C}(G)$ denote the collection of all the cycles $C$ of $G$. 
A natural integer programming formulation for the feedback vertex set problem is as follows:
\begin{alignat}{3}
\nonumber \mbox{Min } & \sum_{v \in V(G)} c_v\,x_v & \\
\label{fvscover} \mbox{s.t. } & \sum_{v \in V(C)} x_v \geqslant 1  & \qquad \forall C \in \mathcal{C}(G)\\
\nonumber & x_v \in \{0,1\} & \qquad \forall v \in V(G).
\end{alignat}
(Throughout, $c_{v}$ denotes the (non-negative) cost of vertex $v$.)
The algorithm of Bar-Yehuda et al.~\cite{bgnr98} implies that the integrality gap of this integer program is $O(\log n)$. Later, Even, Naor, Schieber, and Zosin \cite{ensz00} proved that its integrality gap is also $\Omega (\log n)$.

A better formulation has been introduced by Chudak et al.~\cite{cghw98}.
For $S \subseteq V(G)$, denote by $E(S)$ the set of the edges of $G$ having both ends in $S$, by $G[S]$ the subgraph of $G$ induced by $S$, and by $d_S(v)$ the degree of $v$ in $G[S]$. Then, the following is a formulation for the feedback vertex set problem:
\begin{alignat}{3}
\nonumber \mbox{Min } & \sum_{v \in V(G)} c_v\,x_v & \\
\label{fvsspar} \mbox{s.t. } & \sum_{v \in S} (d_S(v)-1) x_v \geqslant |E(S)| - |S| +1 & \qquad \forall S \subseteq V(G) : E(S) \neq \varnothing\\
\nonumber & x_v \in \{0,1\} & \qquad \forall v \in V(G).
\end{alignat}
Chudak et al.~\cite{cghw98} showed that the integrality gap of this integer program asymptotically equals $2$. Constraints (\ref{fvsspar}) derive from the simple observation that the removal of a feedback vertex set $X$ from $G$ generates a forest having at most $|G| - |X| - 1$ edges. Notice that the covering inequalities (\ref{fvscover}) are implied by (\ref{fvsspar}). 

\subsection{Contribution and Key Ideas}

First, we obtain a $O(\log n)$-approximation algorithm for the diamond hitting set problem, 
leading to a proof that the integrality gap of the natural LP formulation is $\Theta(\log n)$. Then, we develop a 9-approximation algorithm. Both the $O(\log n)$- and 9-approximation algorithm are based on the primal-dual method.   

Our first key idea is contained in the following observation: every simple graph of order $n$ and minimum degree at least $3$ contains a $O(\log n)$-size diamond. This directly yields a $O(\log n)$-approximation algorithm for the diamond hitting set problem, in the unweighted case. However, the weighted case requires more work. 

Our second key idea is to generalize constraints (\ref{fvsspar}) by introducing `sparsity inequalities', that enable us to derive a constant-factor approximation algorithm for the diamond hitting set problem: 
First, by using reduction operations, we ensure that every vertex of $G$ has at least three neighbors.
Then, if $G$ contains a diamond with at most $9$ edges, we raise the dual variable of the corresponding covering constraint. Otherwise, no such small diamond exists in $G$, and we can use this information to select the right sparsity inequality, and raise its dual variable. This inequality would not be valid in case $G$ contained a small diamond.

The way we use the non-existence of small diamonds is perhaps best explained
via an analogy with planar graphs: An $n$-vertex planar simple graph $G$
has at most $3n-6$ edges. However, if we know that $G$ has no small cycle,
then this upper bound can be much strengthened. (For instance, if $G$ is triangle-free then $G$ has at most $2n-4$ edges.) 

We remark that this kind of local/global trade-off did not appear in the work of Chudak et al.~\cite{cghw98} on the feedback vertex set problem, because 
the cycle covering inequalities are implied by their more general inequalities. 
In our case, the covering inequalities and the sparsity inequalities form two incomparable classes of inequalities, and examples show that the sparsity inequalities alone are not enough to derive a constant-factor approximation algorithm. 

The paper is organized as follows. Preliminaries are given in Section \ref{prelim}. Then, in Section~\ref{reductions}, we define some reduction operations that allow us to obtain graphs having some desirable properties. Next, in Section \ref{logn_unweighted}, we deal with the unweighted version of the diamond hitting set problem and provide a simple $O(\log n)$-approximation algorithm. In Section \ref{logn_weighted}, we turn to the weighted version of the problem. We present a $O(\log n)$-approximation algorithm and we prove that the integrality gap of the natural formulation of the problem is $\Theta(\log n)$. 
Finally, in Section \ref{const_weighted}, we introduce the sparsity inequalities, analyze their strength and obtain a $9$-approximation algorithm. 

\section{Preliminaries} 
\label{prelim}

We refer the reader to Diestel~\cite{D05} for undefined terms and notations concerning graphs.
A {\DEF cactus} is a connected graph where each edge belongs to at most one cycle. 
Equivalently, a connected graph 
is a cactus if and only if each of its blocks is isomorphic to either $K_{1}$, $K_{2}$, or a cycle.
Thus, a connected graph is a cactus if and only if it does not contain a diamond as a subgraph.
A graph without diamonds is called a {\DEF forest of cacti} 
(see Figure~\ref{fig_forest_cacti} for an illustration).

\begin{figure}[htb]
\centering 
\includegraphics[scale=.9]{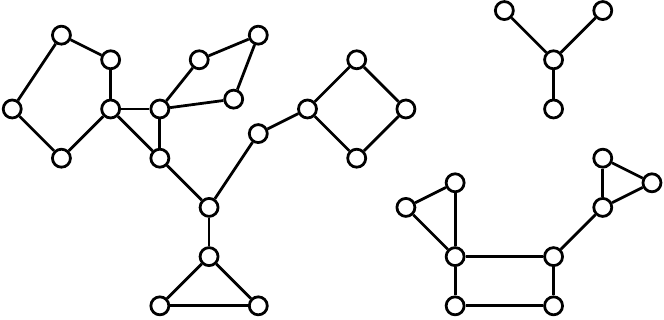}
\caption{A forest of cacti.}
\label{fig_forest_cacti}
\end{figure}

A {\DEF diamond hitting set}, or simply {\DEF hitting set}, of a graph is a subset of vertices that hits every diamond of the graph.  A {\DEF minimum hitting set} of a weighted graph is a hitting set of minimum total cost, and its cost is denoted by $OPT$. 

Let $\mathcal{D}(G)$ denote the collection of all diamonds contained in $G$. From the standard IP formulation for a covering problem, we obtain the following LP relaxation for the diamond hitting set problem:
\begin{alignat}{3}
\nonumber \mbox{Min } && \sum_{v \in V(G)} c_v\,x_v & \\
\label{ds_ineq} \mbox{s.t. } && \sum_{v \in V(D)} x_v &\geqslant 1  & \qquad \forall D \in \mathcal{D}(G)\\
\nonumber && x_v &\geqslant 0 & \qquad \forall v \in V(G).
\end{alignat}
We call inequalities (\ref{ds_ineq}) {\DEF diamond inequalities}. 

\section{Reductions}
\label{reductions}

In this section, we define two reduction operations on graphs: 
First, we define the `shaving' of an arbitrary graph, and then introduce 
a `bond reduction' operation for shaved graphs.

The aim of these two operations is to modify a given graph so that the following useful property holds: 
each vertex either has at least three distinct neighbors, or is incident to at least three parallel edges.

\subsection{Shaving a graph}

Let $G$ be a graph. Every block of $G$ is either isomorphic to $K_{1}$, $K_{2}$, a cycle, or
contains a diamond. Mark every vertex of $G$ that is included in a block containing a diamond.
The {\DEF shaving} of $G$ is the graph obtained by removing
every unmarked vertex from $G$. A graph is {\DEF shaved} if all its vertices belong to a block containing a diamond. Observe that, in particular, every endblock\footnote{
We recall that the block-graph of $G$ has the blocks of $G$ and the cutvertices of $G$ as vertices, a block and a cutvertex are adjacent if the former contains the latter. This graph is always acyclic. An endblock of $G$ is a vertex of the block-graph with degree at most one.} 
of a shaved graph contains a diamond.
See Figure~\ref{fig:shaving} for an illustration.

\begin{figure}[htb]
\centering    
\includegraphics[scale=.9]{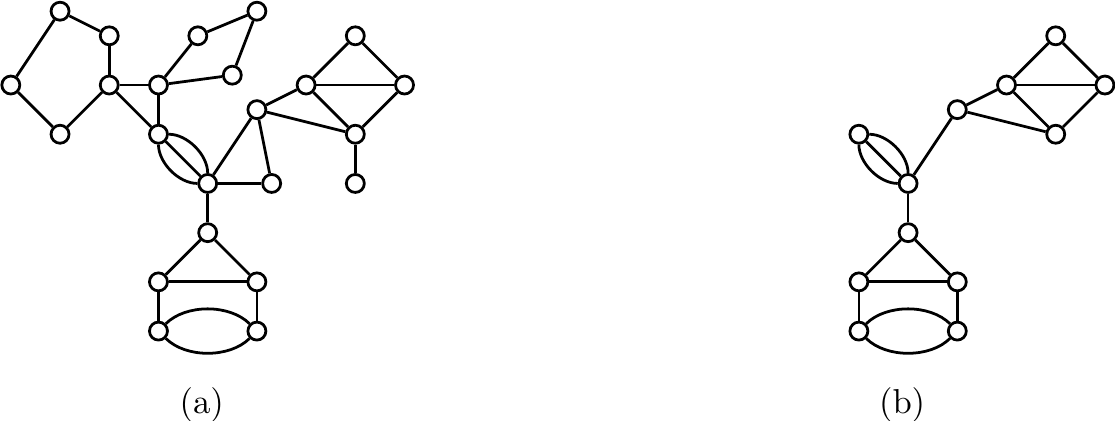}
\caption{\label{fig:shaving} (a) A graph $G$. (b) The graph obtained by shaving $G$.}   
\end{figure}

\subsection{Reducing a bond}

A {\DEF bond} of a graph $G$ is a connected subgraph $Q \subseteq G$ 
equipped with two distinguished vertices $v,w$ (called {\DEF ends})
satisfying the following requirements:
\begin{itemize}
\item $Q$ is a cactus with at least two blocks;
\item the block-graph of $Q$ is a path;
\item $v$ and $w$ belong to distinct endblocks of $Q$;
\item $v$ and $w$ are not adjacent in $Q$;
\item $Q - \{v,w\}$ is a non-empty component of $G - \{v, w\}$, and
\item $Q$ contains all the edges in $G$ between $\{v,w\}$ and $V(Q) - \{v, w\}$.
\end{itemize}
Observe that $Q$ is ``almost'' an induced subgraph of $G$, since $Q$ includes every edge of $G$ between vertices of $Q$, except those between $v$ and $w$ (if any).
The vertices in $V(Q) - \{v,w\}$ are said to be the {\DEF internal vertices} of $Q$.
The bond $Q$ is {\DEF simple} if $Q$ is isomorphic to a path, {\DEF double} otherwise.

Let $G$ be a shaved graph. A vertex $u$ of $G$ is {\DEF reducible} if $u$ has exactly two neighbors in $G$, and there are at most two parallel edges connecting $u$ to each of its neighbors. 
The {\DEF bond reduction} operation is defined as follows. Let $u$ be a reducible vertex and let
$Q_u$ be an inclusion-wise maximal bond of $G$ containing $u$, with ends $v$ and $w$. (Observe that such a bond exists by our hypothesis on $u$; moreover, it might not be unique.) Then, 
remove from $G$ every  internal vertex of $Q_{u}$, and add one or two edges
between $v$ and $w$, depending on whether $Q_u$ is simple or double. 
In the latter case, the two new parallel edges are said to be {\DEF twins}. 
See Figure~\ref{bondreduction} for an illustration of the operation. 
Observe that the resulting graph is also a shaved graph.

\begin{figure}[htb]
\centering    
\includegraphics{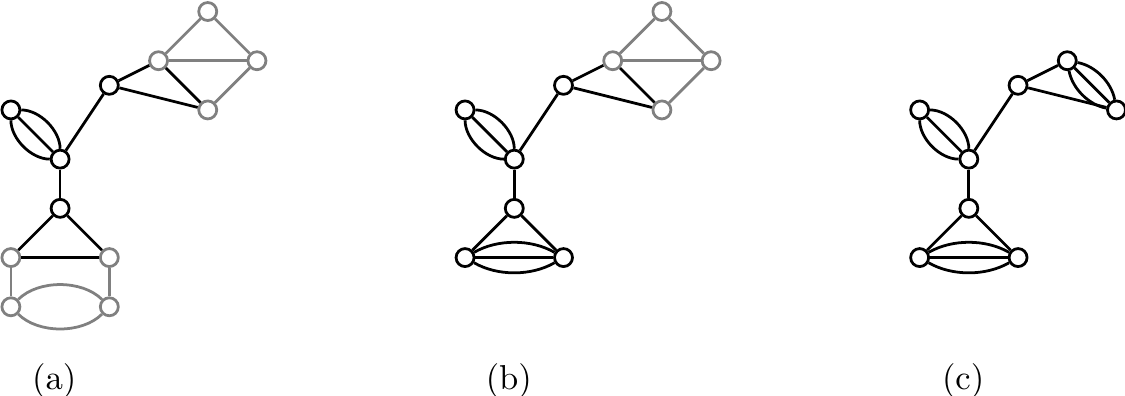}
\caption{\label{bondreduction} 
(a) A shaved graph $G$ with two maximal bonds (in grey).
(b) Reduction of the first bond.
(c) Reduction of the second bond. The graph is now reduced.}   
\end{figure}

A crucial property of the bond reduction operation is that, when applying it iteratively, we never include in the bond to be reduced any edge coming from previous bond reductions. 
This is proved in the following lemma.

\begin{lemma}
\label{lem:disjoint}
Let $G_i$ be a graph obtained from $G$ after 
applying $i$ bond reductions. Let $\bar E_i:= E(G_i) - E(G)$.
Let $v$ be a reducible vertex of $G_{i}$. Let $Q_v$ be a maximal bond of $G_i$ containing $v$.
Then, $E(Q_v) \cap \bar E_i = \varnothing$.
\end{lemma}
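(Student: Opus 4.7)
The plan is to argue by induction on $i$. The base case $i=0$ is immediate since $\bar E_0 = \varnothing$. For the inductive step, let $B$ denote the bond reduced in the $i$-th step, with ends $a^{\ast}, b^{\ast}$ in $G_{i-1}$, and let $F$ denote the one or two new parallel $a^{\ast}b^{\ast}$-edges added. Since $B$ is itself a maximal bond of $G_{i-1}$ containing some reducible vertex of $G_{i-1}$, the inductive hypothesis applied to $B$ gives $E(B) \cap \bar E_{i-1} = \varnothing$; hence $\bar E_i = \bar E_{i-1} \sqcup F$, and every edge of $\bar E_{i-1}$ still lies in $G_i$. Suppose for contradiction that there exist a reducible vertex $v$ of $G_i$ and a maximal bond $Q_v$ containing $v$ with $E(Q_v) \cap \bar E_i \neq \varnothing$. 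I would split on whether $Q_v$ meets $F$.

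If $E(Q_v) \cap F = \varnothing$, then $Q_v$ sits inside $G_{i-1}$ as a subgraph. From the closure and component conditions of a bond, if $a^{\ast}$ or $b^{\ast}$ were interior in $Q_v$, then every $F$-edge incident to it would lie in $Q_v$, contradicting $E(Q_v) \cap F = \varnothing$. Hence each of $a^{\ast}, b^{\ast}$ is either an end of $Q_v$ or lies outside $V(Q_v)$. A direct check, using that $\mathrm{int}(B)$ is connected in $G_{i-1}$ only to $\{a^{\ast}, b^{\ast}\}$, shows that $Q_v$ remains a bond of $G_{i-1}$. Since $v$ is an interior vertex of $Q_v$, one has $v \neq a^{\ast}, b^{\ast}$, so $v$ is still reducible in $G_{i-1}$. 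Extending $Q_v$ to a maximal bond of $G_{i-1}$ containing $v$ and invoking the inductive hypothesis then contradicts the existence of some $e \in E(Q_v) \cap \bar E_{i-1}$.

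If instead $E(Q_v) \cap F \neq \varnothing$, the idea is to lift $Q_v$ to a bond of $G_{i-1}$ that strictly contains $B$, contradicting the maximality of $B$. Since the ends of a bond are non-adjacent in the bond, and $a^{\ast}, b^{\ast} \in V(Q_v)$, they cannot both be ends of $Q_v$; WLOG $a^{\ast}$ is interior. The closure/component conditions then force $F \subseteq E(Q_v)$, and the $F$-edges lie in a single block $K$ of the cactus $Q_v$. Define $Q_v' := (Q_v \setminus F) \cup B$ and verify that $Q_v'$ is a bond of $G_{i-1}$ strictly containing $B$. The key point is the block-tree of $Q_v'$: the edges of $B$ together with those of $K \setminus F$ merge into one cycle block of $Q_v'$ (the cycle obtained by concatenating $B$'s $a^{\ast}$--$b^{\ast}$ path through $\mathrm{int}(B)$ with the $a^{\ast}$--$b^{\ast}$ path across $K \setminus F$), so the block-tree of $Q_v'$ is obtained from that of $Q_v$ by replacing the single node $K$ by this merged block, and is again a path. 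The remaining bond axioms transfer from $G_i$ to $G_{i-1}$ because $\mathrm{int}(B)$ attaches to the rest of $G_{i-1}$ only at $\{a^{\ast}, b^{\ast}\}$. Since $v \in V(Q_v) \setminus \{a^{\ast}, b^{\ast}\}$ witnesses $V(Q_v') \supsetneq V(B)$, this yields $Q_v' \supsetneq B$, the desired contradiction. The main obstacle is precisely this case: one has to run through a small case analysis on the shape of $K$ (a $K_2$; the $2$-cycle of twin $F$-edges when $|F| = 2$; a longer cycle when $|F| = 1$ and $B$ is simple) and on whether $a^{\ast}, b^{\ast}$ are cutvertices of $Q_v$ inside $K$ or not, in order to confirm that no branching appears in the block-tree of $Q_v'$.
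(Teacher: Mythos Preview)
Your argument is correct, but the paper takes a shorter, non-inductive route that avoids your Case~1 entirely. Rather than peeling off the last reduction and splitting on whether $Q_v$ meets the newest edges $F$, the paper lets $j$ be the \emph{largest} index such that $Q_v$ contains an edge $e$ produced at the $j$th reduction, with corresponding bond $Q_w$. Maximality of $j$ guarantees that every edge of $Q_v$ other than $e$ (and its twin $e'$, if any) already lies in $G_j$, so there is nothing to check about $Q_v$ persisting as a bond in intermediate graphs and no need to extend to a maximal bond and reinvoke a hypothesis. After noting that the twin $e'$ (if it exists) must also lie in $Q_v$, one replaces $e,e'$ by $Q_w$ to obtain a bond of $G_j$ strictly containing $Q_w$, contradicting its maximality. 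This is precisely your Case~2 lift, executed once at the critical step $j$ rather than at step $i$; by jumping straight there, your Case~1 becomes vacuous and the induction collapses. What your version buys is that the intermediate claims (why $Q_v$ is still a bond of $G_{i-1}$, why $v$ stays reducible there, why $F\subseteq E(Q_v)$) are made explicit, whereas the paper treats the analogous ``replacement is a bond'' verification as routine and leaves it implicit.
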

\begin{proof}
Arguing by contradiction, assume $E(Q_v) \cap \bar E_i \neq \varnothing$. 
Let $j$ be the maximum index such that $j<i$ and $Q_{v}$ contains an edge $e$ produced 
during the $j$th bond reduction.
Let $w$ be the vertex that has been reduced at iteration $j$, and denote by $Q_{w}$ the
corresponding bond.
Note that, if $Q_w$ is double, then $e$ and its twin edge $e'$ are both included in $Q_v$ (by construction). 
Now, replacing $e$ (and $e'$ if it exists) by $Q_{w}$ in $Q_{v}$ produces a bond of $G_{j}$
that includes $w$ and is strictly larger than $Q_{w}$. This contradicts the maximality of $Q_w$.
\end{proof}

A {\DEF reduced graph} $\wt G$ of $G$ is any graph obtained from $G$ by iteratively applying a bond reduction, as long as there is a reducible vertex (see Figure~\ref{bondreduction}). 
We remark that there is not necessarily a unique reduced graph of $G$ 
(consider for instance $K_{3}$ where two edges are doubled). 

As mentioned earlier, every reduced graph has the following desirable property: every vertex either has at least three distinct neighbors or is incident to at least three parallel edges.

\section{A $O(\log n)$-approximation algorithm in the unweighted case}
\label{logn_unweighted}

In this section, we deal with the unweighted version of the diamond hitting set problem. We first show that every reduced graph 
with $n$ vertices contains a diamond of size $O(\log n)$, and then provide a $O(\log n)$-approximation algorithm.

\subsection{Small diamonds in reduced graphs}

As a first step, we show that every simple graph with minimum degree at least $3$ contains a 
diamond of size $O(\log n)$.

\begin{lemma}
\label{lem:supercubic_implies_log_diamond}
Every simple $n$-vertex graph with minimum degree at least $3$ contains a diamond
of size at most $6\log_{3/2} n + 8$. Moreover, such a diamond can be found in polynomial time.
\end{lemma}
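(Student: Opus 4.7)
The plan is to run a breadth-first search from an arbitrary vertex $v$ of $G$ and exploit the minimum-degree-3 assumption twice: first to produce a short cycle $C$, then to extend $C$ into a diamond.

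For the first step, I would grow a BFS tree $T$ rooted at $v$ and write $L_k$ for the set of vertices at BFS-level $k$, and $V_k = L_0 \cup \dots \cup L_k$. As long as no non-tree edge has appeared within the first $k$ levels, every $u \in L_k$ has at least two children in $L_{k+1}$: indeed, $u$ has $\geq 3$ neighbours, at most one of which (its parent) lies in $L_{k-1}$, and the remaining two must be distinct fresh vertices of $L_{k+1}$, for otherwise a non-tree edge would already have been seen. Combined with $|L_1| \geq 3$, this yields $|L_k| \geq 3 \cdot 2^{k-1}$ and hence a non-tree edge $e = xy$ must be discovered at some level $k_1 = O(\log n)$. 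Together with the tree paths from $v$ to $x$ and to $y$, $e$ closes a cycle $C \subseteq G[V_{k_1}]$ of length at most $2 k_1 + 1 = O(\log n)$.

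For the second step, I would continue the BFS and try to produce either a chord of $C$ or a second fundamental cycle overlapping $C$ in at least one edge; in the first case $C$ together with the chord is a diamond of size $|C| + 1$, and in the second case two cycles sharing an edge form a diamond whose size is bounded by the sum of the two cycle lengths. To force one of these two situations to arise within $O(\log_{3/2} n)$ extra levels, I would combine the forest-of-cacti bound (a diamond-free graph on $n'$ vertices has at most $\tfrac{3}{2}(n' - 1)$ edges) with the lower bound $2 m_k \geq 3 n_{k-1}$ coming from applying the minimum-degree-3 assumption to the fully-interior set $V_{k-1} \subseteq V_k$. A pigeonhole argument on how many fundamental cycles share a given tree edge then delivers the required overlap once $n_k$ crosses a certain threshold; working out this threshold and tracking the path lengths of the two fundamental cycles should give the bound $6 \log_{3/2} n + 8$.

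The main obstacle, as I see it, is the second step: ensuring that the overlap between two fundamental cycles is an actual edge (and not just a single vertex, which would yield only a figure-eight) and controlling the constants so that the final count really fits $6 \log_{3/2} n + 8$ rather than just $O(\log n)$. Polynomial-time execution is straightforward, as BFS, chord detection and fundamental-cycle bookkeeping can all be carried out in polynomial time.
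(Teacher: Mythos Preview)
Your first step (find a short cycle via BFS) is fine and standard. The gap is in the second step: the combination you propose does not give geometric growth.

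Concretely, write $n_k = |V_k|$ and $m_k = |E(G[V_k])|$. The cactus bound on $G[V_k]$ gives $m_k \leq \tfrac{3}{2}(n_k - 1)$, and the interior-degree bound gives $2 m_k \geq \sum_{v \in V_{k-1}} d_{G[V_k]}(v) \geq 3 n_{k-1}$. Together these yield only $n_{k-1} \leq n_k - 1$, which is trivial. Even adding the obvious $d_{G[V_k]}(v) \geq 1$ for $v \in L_k$ improves this merely to $n_{k-1} \leq n_k - \tfrac{3}{2}$. So the forest-of-cacti edge count alone cannot force $n_k \geq (3/2)\,n_{k-2}$, and your pigeonhole argument on fundamental cycles collapses to the same inequality: the number of non-tree edges is $m_k - n_k + 1$, and requiring two of their fundamental cycles to share a tree edge is exactly equivalent to $m_k > \tfrac{3}{2}(n_k - 1)$.

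The paper circumvents this by a finer classification of the BFS levels. Each vertex of $N_d$ is coloured according to how many neighbours it has in $N_{d-1} \cup N_d$ (one, two, or at least three). A ``red'' vertex, or two adjacent ``black'' vertices in consecutive levels, or two non-adjacent black vertices with a common neighbour below, each yields a diamond of size at most $3d+2$ directly. In the absence of all such configurations, the colouring constraints (white vertices must send at least two edges up, black vertices in $N_{i-1}$ must have a white neighbour in $N_i$, etc.) combine to give $|N_d| \geq \tfrac{3}{2}|N_{d-2}|$, which is the geometric growth you need. Your sketch is missing an analogue of this local structural analysis; the global edge count is too coarse to separate ``cactus with many degree-3 vertices but slow level growth'' from ``cactus with fast level growth''.
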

\begin{proof}
Let $G$ be a simple graph with $n$ vertices and minimum degree at least $3$, and let $u$ be an arbitrary vertex of $G$.
For $d\geqslant 0$, let $N_{d}$ be the set of vertices at distance $d$ from $u$; 
also, let $T_{d}$ be a spanning tree of the graph induced by $N_{0} \cup \cdots \cup N_{d}$
such that, for every vertex $v$ of $T_{d}$, the distance from $u$ to $v$ in $T_{d}$
is the same as in $G$. (Such a tree can be obtained by {\em breadth-first search}, for instance.)
Notice that the set of leaves of $T_{d}$ is exactly $N_{d}$.

For $d\geqslant 1$, color in white the vertices in $N_{d}$ that have exactly one neighbor
in $N_{d-1} \cup N_{d}$; in black those that have exactly two, 
and in red those that have at least three of them.

\begin{claim}
\label{claim:1}
Let $d \geqslant 1$. If $N_{d}$ contains a red vertex, then $G$ has a diamond of size at most $3d+2$.
The same is true if $N_{d}$ contains a black vertex in $N_{d}$ with a black neighbor in 
$N_{d-1}$, or two black vertices in $N_{d}$ that are not adjacent and have
a common neighbor in $N_{d-1}$.
\end{claim}
\begin{proof}
We only consider the case of a red vertex, the proof of the two other cases is similar and
left to the reader.
Let $v \in N_{d}$ be a red vertex.
Let $v_{1}$ be the unique neighbor of $v$ in the tree $T_{d}$. (Thus, $v_{1} \in N_{d-1}$.)
Let $v_{2}, v_{3}$ be two neighbors of $v$ in $N_{d-1} \cup N_{d}$ that are distinct from $v_{1}$.
Let $C_{2}$ and $C_{3}$ be the unique cycles in $T_{d} + vv_{2}$ and $T_{d} + vv_{3}$, respectively.
These two cycles are distinct and share the edge $vv_{1}$; hence, 
$C_{2} \cup C_{3}$ contains a diamond $D$.
Let $P_{i}$ ($i=1,2,3$) be the $u$-$v_{i}$ path in $T_{d}$, augmented with the edge $vv_{i}$.
Observe that $C_{2} \cup C_{3} \subseteq P_{1} \cup P_{2} \cup P_{3}$. 
Therefore, the size of $D$ is 
$$
||D|| \leqslant ||C_{2} \cup C_{3} || \leqslant ||P_{1}|| + ||P_{2}|| + ||P_{3}|| \leqslant d + (d+1) + (d+1) = 3d+2.
$$

\end{proof}

\begin{claim}
\label{claim:2}
Let $d \geqslant 2$. If $G$ has no diamond of size at most $3d+2$, then
$|N_{d}| \geqslant \frac32 |N_{d-2}|$.
\end{claim}
\begin{proof}
Assume that $G$ has no diamond of size at most $3d+2$. 
The claim is easily seen to hold if $d=2$, so we assume $d \geqslant 3$.

For $i \in \{1, \dots, d\}$,
let $w_{i}$ and $b_{i}$ be the number of white and black vertices in $N_{i}$, respectively.
By Claim~\ref{claim:1}, there is no red vertex in $N_{i}$; thus, $|N_{i}| = w_{i} + b_{i}$.
Let $p_{i}$ and $q_{i}$ be the number of black vertices in $N_{i}$ having
one and two neighbors in $N_{i-1}$, respectively; so, $b_{i} = p_{i} + q_{i}$.

Now, fix $i \in \{2, \dots, d\}$. (Thus, vertices in both $N_{i}$ and $N_{i-1}$ 
are colored in black and white.) 
Since $G$ has minimum degree at least $3$, every vertex in $N_{i-1}$ has at least one neighbor in $N_{i}$.
By Claim~\ref{claim:1}, every black vertex in $N_{i-1}$ has at least one white neighbor in $N_{i}$, showing
\begin{equation}
\label{eq:eq1}
w_{i} \geqslant b_{i-1}.
\end{equation}
The number of edges between $N_{i-1}$ and $N_{i}$ in $G$ 
is exactly $w_{i} + p_{i} + 2q_{i} =: m_{i}$.
On the other hand, we also have $m_{i} \geqslant 2w_{i-1} + b_{i-1}$, since
every white vertex in $N_{i-1}$ has at least two neighbors in $N_{i}$
and every black vertex in $N_{i-1}$ has at least one neighbor in $N_{i}$. 
Hence,
\begin{equation}
\label{eq:eq2}
w_{i} + p_{i} + 2q_{i} \geqslant 2w_{i-1} + b_{i-1}.
\end{equation}
By Claim~\ref{claim:1}, if a white vertex in $N_{i-1}$
is adjacent to two black vertices in $N_{i}$, then the latter vertices are adjacent.
Also, black vertices in $N_{i}$ can only be adjacent to white vertices in $N_{i-1}$, 
again by Claim~\ref{claim:1}. These two observations imply
\begin{equation}
\label{eq:eq3}
\frac12 w_{i-1} \geqslant q_{i}.
\end{equation}

Using Eq.~\eqref{eq:eq2} and~\eqref{eq:eq3}, we obtain
$$
|N_{i}| = w_{i} + p_{i} + q_{i}
\geqslant 2w_{i-1} + b_{i-1} - q_{i}
\geqslant \frac32 w_{i-1} + b_{i-1}.
$$
It follows
\begin{align*}
|N_{d}| &\geqslant \frac32 w_{d-1} + b_{d-1} \\
&= |N_{d-1}| + \frac12 w_{d-1} \\
&\geqslant \frac32 w_{d-2} + b_{d-2} + \frac12 w_{d-1} \\
&= |N_{d-2}| + \frac12 w_{d-2} + \frac12 w_{d-1} \\
&\geqslant |N_{d-2}| + \frac12 w_{d-2} + \frac12 b_{d-2} \\
&= \frac32 |N_{d-2}|,
\end{align*}
as claimed. (The last inequality follows from Eq.~\eqref{eq:eq1}.)
\end{proof}

Now, we may prove Lemma~\ref{lem:supercubic_implies_log_diamond}.
Let $d$ be the largest even integer such that $G$ has no diamond of size at most $3d + 2$.
If $d \leqslant 2$, then $G$ has a diamond with size at most 
$3(d+2) + 2 \leqslant 14 \leqslant 6\log_{3/2} n + 8$ (since $n \geqslant 4$).
Thus, assume $d\geqslant 4$. By Claim~\ref{claim:2},
\begin{align*}
n &\geqslant \sum_{i=0}^{d} |N_{i}| \\
&\geqslant |N_{0}| + |N_{2}| + \cdots + |N_{d-2}| + |N_{d}| \\
&\geqslant 1 + \frac32 + \cdots + \left(\frac32\right)^{d/2 -1} + \left(\frac32\right)^{d/2} \\
&\geqslant \left(\frac32\right)^{d/2},
\end{align*}
that is,
$$
d \leqslant 2 \log_{3/2} n.
$$
Therefore, $G$ has a diamond of size at most $3(d+2) + 2 \leqslant 6\log_{3/2} n + 8$.

To conclude, we note that the above proof is easily turned into a polynomial-time algorithm finding
the desired diamond.
\end{proof}

The same result holds for reduced graphs:

\begin{lemma}
\label{lem:reduced_implies_log_diamond}
Every reduced graph $\wt{G}$ contains a diamond of size at most $6\log_{3/2} |\wt{G}| + 8$.
\end{lemma}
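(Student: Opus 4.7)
The plan is to reduce to Lemma~\ref{lem:supercubic_implies_log_diamond} by exploiting the key structural property of reduced graphs recalled immediately before the statement: in $\wt G$, every vertex either has at least three distinct neighbors, or is incident to at least three parallel edges joining it to a single neighbor.

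I would first dispose of the ``parallel edges'' case. Suppose some vertex $v$ of $\wt G$ is joined to a single neighbor $w$ by three mutually parallel edges. Then these three edges, viewed together, are already a diamond of size $3$, which is at most $6\log_{3/2}|\wt G| + 8$ (note that a reduced graph has at least $4$ vertices, since it must still contain a diamond, so the $\log$ is nonnegative). So in this case we are done.

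Otherwise, every vertex of $\wt G$ has at least three distinct neighbors. Let $G'$ denote the underlying simple graph of $\wt G$, obtained by retaining exactly one copy of each parallel class of edges. Then $V(G') = V(\wt G)$, so $|G'| = |\wt G|$, and by assumption $G'$ has minimum degree at least $3$. Applying Lemma~\ref{lem:supercubic_implies_log_diamond} to $G'$ yields a diamond $D'$ in $G'$ of size at most $6\log_{3/2}|\wt G| + 8$. Since $G'$ is a subgraph of $\wt G$, $D'$ is also a subgraph of $\wt G$, and hence a diamond in $\wt G$ of the required size.

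There is no real obstacle to overcome here: the whole argument is a two-line dichotomy together with a direct appeal to the previous lemma. The only point that deserves a brief mention is that a diamond found in the simple underlying graph $G'$ lifts to a diamond in the multigraph $\wt G$ of the same size, which is immediate since $G' \subseteq \wt G$ as a subgraph. A polynomial-time implementation also follows for free, by first scanning for a triple of parallel edges incident to a common vertex and otherwise running the algorithm of Lemma~\ref{lem:supercubic_implies_log_diamond} on $G'$.
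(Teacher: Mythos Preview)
Your proof is correct and follows essentially the same approach as the paper: dispose of the case where some pair of vertices is joined by three parallel edges (a $3$-edge diamond), and otherwise pass to the underlying simple graph and invoke Lemma~\ref{lem:supercubic_implies_log_diamond}. One small inaccuracy: your parenthetical claim that a reduced graph has at least four vertices is not quite right (two vertices joined by three parallel edges already form a reduced graph), but this is harmless since $3 \le 8 \le 6\log_{3/2}|\wt G| + 8$ for every $|\wt G| \ge 1$.
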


\begin{proof}
We can assume that any two adjacent vertices of $\wt{G}$ are adjacent through at most $2$ parallel edges, since otherwise there is a diamond with three edges and the statement trivially holds. Thus, each vertex of $\wt{G}$ has at least three distinct neighbors. Let $\wt{G}'$ be the subgraph of $\wt{G}$ obtained by replacing every double edge by a simple edge. Clearly, $\wt{G}'$ is simple and has minimum degree at least $3$. Therefore, by Lemma \ref{lem:supercubic_implies_log_diamond}, $\wt{G}'$ has a diamond (which can be found in polynomial time) of size at most  $6\log_{3/2} |\wt{G}'| + 8$, thus at most $6\log_{3/2} |\wt{G}| + 8$. The result follows. 
\end{proof}

\subsection{The algorithm}

Our algorithm for the diamond hitting set problem on unweighted graphs is described
in Algorithm~\ref{algo:unweighted}.

\begin{algorithm}
\caption{\label{algo:unweighted}A $O(\log n)$-approximation algorithm for unweighted graphs.}
\begin{itemize}
\item $X \leftarrow \varnothing$
\item While $X$ is not a hitting set of $G$, repeat the following steps:
\begin{itemize}
\item Compute a reduced graph $\wt{G}$ of $G-X$
\item Find a diamond $\wt{D}$ in $\wt{G}$ of size at most $6\log_{3/2}|\wt{G}| + 8$ \hfill
(using Lemma \ref{lem:reduced_implies_log_diamond})
\item Include in $X$ all vertices of $\wt{D}$
\end{itemize}
\end{itemize}
\end{algorithm}

The algorithm relies on the simple fact that every hitting set of a reduced graph $\wt G$ of 
a graph $G$ is also a hitting set of $G$ itself.
The set of diamonds computed by the algorithm yields a collection $\mathcal{D}$ of pairwise vertex-disjoint diamonds in $G$. In particular, the size of a minimum hitting set is at least $|\mathcal{D}|$. For each diamond in $\mathcal{D}$, at most $6\log_{3/2} n + 8$ vertices were added to
the hitting set $X$. Hence, the approximation factor of the algorithm is $O(\log n)$.


\section{A $O(\log n)$-approximation algorithm}
\label{logn_weighted}

The present section is devoted to a $O(\log n)$-approximation algorithm for the diamond hitting set problem in the weighted case, which is based on the primal-dual method. We start by defining, in Section \ref{sec:working_LP}, the actual LP relaxation of the problem used by the algorithm, together with its dual. Then, in Section \ref{algo_logn}, we describe the approximation algorithm and, in Section \ref{def_logn_analysis}, we prove that it provides a $O(\log n)$-approximation for the diamond hitting set problem. Finally, in Section \ref{def_logn_gap}, we show that the integrality gap of the natural LP relaxation for the problem (see (\ref{ds_ineq}), page \pageref{ds_ineq}) is $\Theta(\log n)$. This last result is obtained using expander graphs with large girth.

\subsection{The working LP and its dual}
\label{sec:working_LP}

Our approximation algorithm is based on the natural LP relaxation for the diamond hitting set problem, given on page \pageref{ds_ineq}. To simplify the presentation, we do not directly resort to that LP relaxation  but to a possibly weaker relaxation that is constructed during the execution of the algorithm, that we call the {\DEF working LP}. At each iteration, an inequality is added to the working LP. These inequalities, that we name {\DEF blended diamond inequalities}, are all implied by diamond inequalities (\ref{ds_ineq}). The final working LP reads:
\begin{alignat}{4}
\nonumber \textrm{(LP)} \qquad& &\mbox{Min }& &\sum_{v \in V(G)} c_v\,x_v & \\
\nonumber & &\mbox{s.t. } & &\sum_{v \in V(G)} a_{i,v}\,x_v &\geqslant \beta_i & \qquad \forall i \in \{1,\ldots,k\}\\
\nonumber & & & &x_v &\geqslant 0 &\qquad \forall v \in V(G),
\end{alignat}
where $k$ is the total number of iterations of the algorithm. The dual of (LP) is:
\begin{alignat}{4}
\nonumber \textrm{(D)} \qquad& &\mbox{Max }& &\sum_{i = 1}^k \beta_i\,y_i & \\
\nonumber & &\mbox{s.t. } & &\sum_{i = 1}^k a_{i,v}\,y_i &\leqslant c_v  & \qquad \forall v \in V(G)\\
\nonumber & & & &y_i &\geqslant 0 & \qquad \forall i \in \{1,\ldots,k\}.
\end{alignat}

The algorithm is based on the primal-dual method. It maintains a boolean primal solution $x$ and a feasible dual solution $y$. Initially, all variables are set to $0$. Then the algorithm enters its main loop, that ends when $x$ satisfies all diamond inequalities. At the $i$th iteration, a violated inequality $\sum_{v \in V} a_{i,v}\,x_v \geqslant \beta_i$ is added to the working LP and the corresponding dual variable $y_i$ is increased. In order to preserve the feasibility of the dual solution, we stop increasing $y_i$ whenever some dual inequality becomes tight. That is, we stop increasing when $\sum_{j = 1}^i  a_{j,v}\,y_j = c_v$ for some vertex $v$, that is said to be {\DEF tight}. Furthermore, we also stop increasing $y_i$ in case a `collision' occurs (see Section \ref{collisions}). All tight vertices $v$ (if any) are then added to the primal solution. That is, the corresponding variables $x_v$ are increased from $0$ to $1$. The current iteration then ends and we check whether $x$ satisfies all diamond inequalities. If so, then we exit the loop, perform a reverse delete step, and output the current primal solution. 

The precise way the violated blended diamond inequality is chosen is defined in Sections \ref{def_logn_ineqs} and \ref{def_logn_compute_ieq}. It depends among other things on the {\DEF residual cost} (or slack) of the vertices. The residual cost of vertex $v$ at the $i$th iteration is the number $c_v - \sum_{j = 1}^{i-1} a_{j,v}\,y_j$. Note that the residual cost of a vertex is always nonnegative, and zero if and only if the vertex is tight. 

\subsection{The algorithm}
\label{algo_logn}

A formal definition of the algorithm is given in Algorithm~\ref{algo:logn}. 
All the steps are explicit, except those labeled $\star$, which will be specified later.

\begin{algorithm}
\caption{\label{algo:logn}A $O(\log n)$-approximation algorithm for weighted graphs.}
\label{ag:weighted}
\begin{itemize}
\item $X \leftarrow \varnothing$; \quad  $y \leftarrow 0$; \quad $i \leftarrow 0$; \quad  $\mathcal{L} \leftarrow \varnothing$
\item While $X$ is not a hitting set of $G = (V,E)$, repeat the following steps:
\begin{itemize}
\item $i \leftarrow i+1$
\item Let $H$ be the graph obtained by shaving $G - X$
\item Find a reduced graph $\wt{H}$ of $H$
\item Find a diamond $\wt{D}$ in $\wt{H}$ of size $\leqslant\, 6\log_{3/2}|\wt{H}| + 8$ \hfill (using Lemma \ref{lem:reduced_implies_log_diamond})
\item[$\star$] Find an induced subgraph $S$ of $H$, based on $\wt{D}$ \hfill (see Section \ref{def_logn_ineqs})
\item[$\star$] If $S$ is not consistent with $\mathcal{L}$, modify $S$ \hfill (see Section \ref{sec:modify_S})
\item[$\star$] Compute a violated blended diamond inequality $\sum_{v \in V} a_{i,v}\,x_v \geqslant \beta_i$,\\ based on $S$, $\mathcal{L}$ and the residual costs;  add it to (LP) \hfill (see Section \ref{def_logn_compute_ieq})
\item[$\star$] Increase $y_i$ until some vertex becomes tight, or a collision occurs \\ \mbox{}
\hfill (see Section \ref{collisions})
\item[$\star$] Update $\mathcal{L}$ \hfill (see Section \ref{def_logn_update_L_1})
\item[$\star$] Add all tight vertices $v$ to $X$, in a certain order \hfill (see Section \ref{def_logn_order})
\item[$\star$] Re-update $\mathcal{L}$ \hfill (see Section \ref{def_logn_update_L_2})
\end{itemize}
\item $k \leftarrow i$
\item Perform a reverse delete step on $X$ 
\end{itemize}
\end{algorithm}

Above, $\mathcal{L}$ is a collection of triples $(T, B, \{v,w\})$ used to guide the choice of subgraph $S$,
where $T$ and $B$ are internally disjoint $v$--$w$ paths (see Section \ref{sec:modify_S}).

We remark that the set $X$ naturally corresponds to a primal solution $x$, obtained by setting $x_{v}$ to $1$ if $v \in X$, to $0$ otherwise, for every $v\in V(G)$. This vector $x$ satisfies the diamond inequalities (\ref{ds_ineq}) exactly when we exit the \emph{while} loop of the algorithm, that is, when $X$ becomes a hitting set.

The reverse delete step consists in considering the vertices of $X$ in the reverse order in which they were added to $X$ and deleting those vertices $v$ such that $X - \{v\}$ is still a hitting set. Observe that, because of this step, the hitting set $X$ output by the algorithm is inclusion-wise minimal.

The remainder of this section is organized as follows.
In Section \ref{def_logn_ineqs}, we define the `support graph' $S$ of the inequalities and then explain, in Section \ref{sec:modify_S}, how to modify it when it is not consistent with $\mc{L}$. After that, we define the blended diamond inequalities in Section \ref{def_logn_compute_ieq},  we define collisions and explain how to take care of them in Section \ref{collisions}, we specify the insertion order of vertices in the solution in Section \ref{def_logn_order}, and we explain the way list $\mc{L}$ is updated in Sections \ref{def_logn_update_L_1} and \ref{def_logn_update_L_2}.

\subsubsection{The support graph of the inequalities}
\label{def_logn_ineqs}
First, we need some definitions.
Let $H$ be a shaved graph  and let $\wt{H}$ be a reduced graph of $H$ 
(as defined in Section \ref{reductions}). 
Vertices in $V(\wt{H})$ and $V(H) - V(\wt{H})$
are called {\DEF branch vertices} and {\DEF internal vertices} of $H$,
respectively.

We will consider graphs that are equipped with a collection
of specific subgraphs, called `pieces', which have a structure similar
to that of bonds. The first kind of such graphs are simply diamonds:
Consider a diamond $D$, and replace each edge $vw \in E(D)$ with a
path between $v$ and $w$. Each such path is a {\DEF piece} of the resulting
diamond; the vertices $v,w$ are the {\DEF ends} of the piece, the others are the {\DEF internal} vertices of the piece.

A second kind of graph equipped with pieces is a {\DEF necklace}, defined
as any graph obtained as follows. 
First, choose a cycle $C$ (cycles consisting of two parallel edges are allowed).
Then, select a non-empty subset $Z$ of edges of $C$. Next, replace each edge $vw\in E(C) - Z$ of $C$
with a path between $v$ and $w$. Finally, replace each edge $vw\in Z$ 
either by two internally disjoint $v$--$w$ paths, or 
by a cactus whose block-graph is a path, in such a way that $v$ and $w$ lie in distinct endblocks of the cactus and in no other block. In both cases, the subgraph by which an edge 
of $C$ has been replaced is called a {\DEF piece}
of the necklace; ends and internal vertices of the piece are defined as expected.
See Figure~\ref{fig:mutation} for an illustration.

A piece $Q$ of a diamond or necklace is {\DEF simple} if $Q$ is isomorphic to a path, 
{\DEF double} otherwise.
Let us point out that, while pieces and bonds look very similar at first sight,
there are some differences between the two notions. (Notice for instance that
a piece of a necklace could consist of a single edge or a cycle.)

A diamond or necklace $S$ together with its pieces is {\DEF rooted} in $H$ if 
\begin{itemize}
\item $S$ is an induced subgraph of $H$; 
\item every end of a piece of $S$ is a branch vertex of $H$;
\item every internal vertex of a piece of $S$ is an internal vertex of $H$, and
\item there is no edge in $H$ between an internal vertex of a piece of $S$ and a vertex in $V(H) - V(S)$.
\end{itemize}
(Observe that this definition also depends on $\wt{H}$, since the latter graph determines
which vertices of $H$ are branch vertices.)

\begin{figure}[hbt]
\centering \includegraphics[scale=0.9]{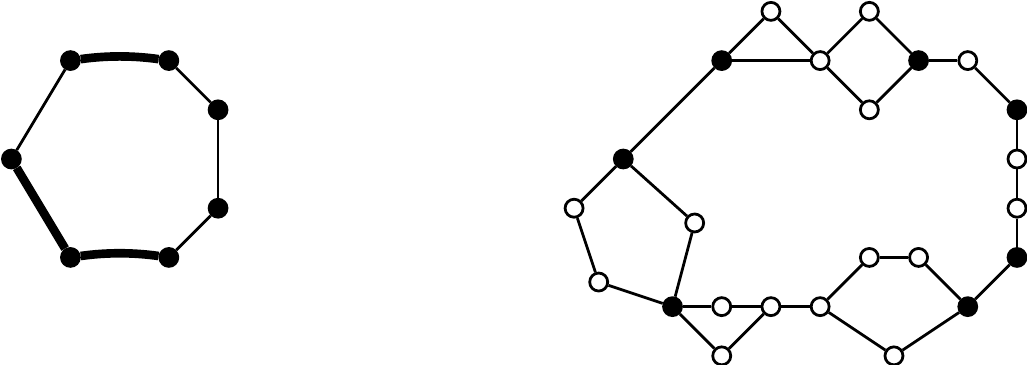}
\caption{A cycle $C$, where edges in $Z$ are thicker (left) and a necklace built from $C$ (right).}
\label{fig:mutation}
\end{figure}

\begin{lemma}
\label{lem:diamond-types}
Let $H$ be a shaved graph.
Let $\wt H$ be a reduced graph of $H$. 
Then, given a diamond $\wt D \subseteq \wt H$,
one can find in polynomial time a subgraph $S$ of $H$
such that one of the following conditions is satisfied:
\begin{enumerate}[(i)]
\item \label{DT-diamond}
$S$ is a diamond rooted in $H$, with at most $||\wt D||$ pieces;
\item \label{DT-cycle}
$S$ is a necklace rooted in $H$, with at most $||\wt D||$ pieces;
\item \label{DT-parallel}
$S$ consists of two vertices with at least four parallel edges between them;
\item \label{DT-K4}
$S$ is isomorphic to $K_4$.
\end{enumerate}
\end{lemma}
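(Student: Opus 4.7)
My plan is to lift $\wt{D}$ back into $H$. For each edge $e \in E(\wt{D})$, associate the substructure $R(e) \subseteq H$ that was compressed into $e$ by the bond reductions---namely $e$ itself if it is an original edge that survived all reductions, or the bond $Q_e$ that was reduced to produce $e$. Two facts from the preceding material do the preparatory work. First, Lemma \ref{lem:disjoint}, together with the observation in its proof that a pair of twin edges is always reduced jointly, ensures that the $R(e)$ are pairwise edge-disjoint once twin edges are identified with each other (so a twin pair shares one double bond). Second, the defining properties of a bond---$V(Q)\setminus\{v,w\}$ is a component of $G-\{v,w\}$ and $Q$ contains every edge between $\{v,w\}$ and this component---imply that every internal vertex of any $R(e)$ has all of its $H$-neighbors inside $R(e)$. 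This immediately yields the ``rooted'' condition that no edge of $H$ joins an internal vertex of any piece of $S$ to a vertex of $V(H)\setminus V(S)$.

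Taking $S = \bigcup_{e \in E(\wt{D})} R(e)$ as our candidate, the remaining work is purely structural. I would case-split according to the types of the $R(e)$ (simple, i.e., a path; or double, i.e., a cactus with at least one cycle-block), and in the double case according to whether both members of each twin pair occur in $\wt{D}$. When every $R(e)$ is simple, $S$ is $\wt{D}$ with each edge subdivided into $R(e)$, giving a diamond rooted in $H$ with $\|\wt{D}\|$ simple pieces---outcome (\ref{DT-diamond}). When a full twin pair $\{e, e'\}$ lies inside $\wt{D}$, the two parallel edges $e$ and $e'$ share endpoints, and the internal disjointness of the three paths of $\wt{D}$ forces these common endpoints to be the two poles of $\wt{D}$ (no pair of internal vertices of $\wt D$ can be joined by two parallel edges); together with the third pole-to-pole path one obtains a necklace with one double piece and at most $\|\wt{D}\|-1$ simple pieces---outcome (\ref{DT-cycle}). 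Finally, when $\wt{D}$ contains only one twin of a double bond, the full bond must still be included in $S$ for rootedness, and depending on how the extra arc fits against $\wt{D}$, the resulting subgraph either closes into a larger necklace (outcome (\ref{DT-cycle})), or, when the arcs are short and clustered between a single pair of vertices, collapses to outcome (\ref{DT-parallel}) or (\ref{DT-K4}).

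The main obstacle I expect is this last case: carefully enumerating how a double bond whose twin partner lies outside $\wt{D}$ can combine with the rest of $\wt{D}$'s lift to yield exactly one of the four listed shapes, while keeping the piece count at most $\|\wt{D}\|$. The other two ingredients come essentially for free---rootedness from the bond-definition argument above, and polynomial time by recording the reduction history so that each $R(e)$ can be retrieved in polynomial time, with the remaining case analysis being local to $\wt{D}$.
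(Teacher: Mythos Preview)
Your lift-and-assemble approach has the right spirit, but there is a real gap at the ``rooted'' requirement. You verify only the fourth condition (no edge of $H$ from an internal piece-vertex to $V(H)\setminus V(S)$); you do not secure the first condition, that $S$ be an \emph{induced} subgraph of $H$. The diamond $\wt D$ need not be induced in $\wt H$: there may be edges of $\wt H$ between vertices of $\wt D$ that $\wt D$ does not use, and since every edge of $\wt H$ between two branch vertices is already an edge of $H$, your union $S=\bigcup_{e\in E(\wt D)} R(e)$ omits these and fails to be induced. This, and not the ``one twin missing'' scenario you describe, is the actual source of outcomes (\ref{DT-parallel}) and (\ref{DT-K4}). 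For instance, if $V(\wt D)$ induces a $K_4$ in $\wt H$ whose six edges are all original $H$-edges, then no bond, twin, or double piece is involved at all, yet neither (\ref{DT-diamond}) nor (\ref{DT-cycle}) can apply to any induced subgraph on those four vertices and you must output (\ref{DT-K4}). Your case split never reaches this situation.

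The paper repairs this by passing first to the induced subgraph $K:=\wt H[V(\wt D)]$, then to a vertex-minimal induced $K'\subseteq K$ that still contains a diamond, and then casing on the maximum edge multiplicity $\mu$ in $K'$. Minimality pins $K'$ down to one of four shapes---a simple diamond, a $K_4$ ($\mu=1$), a cycle with some doubled edges ($\mu=2$), or two vertices joined by $\mu\geqslant 3$ parallel edges---and the primitive subgraph of $K'$ in $H$ (after trimming an edge or a few vertices in the degenerate subcases) lands in one of (\ref{DT-diamond})--(\ref{DT-K4}) and is induced by construction. Your correct observation that parallel edges inside a theta graph can only join its poles becomes unnecessary once the argument is organised around $K'$ rather than around $\wt D$.
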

\begin{proof}
First, we associate to each edge  
of $\wt{H}$ a corresponding {\DEF primitive subgraph} in $H$, defined as follows. 
Consider an edge $e\in E(\wt{H})$. If $e$ was already present in $H$, then its primitive subgraph is the edge itself and its two ends. 
Otherwise, the primitive subgraph of $e$ is the bond whose reduction produced $e$.
In particular, if $e$ has a twin edge $e'$, then the primitive subgraphs of $e$ and $e'$ coincide. 
The primitive subgraph $J$ of a subgraph $\wt{J} \subseteq \wt{H}$ is defined simply as the union of the primitive subgraphs of every edge in $E(\wt{J})$. Note that primitive subgraphs are well-defined, thanks to Lemma \ref{lem:disjoint}. Also, notice that the primitive subgraph of a subgraph of $\wt{H}$ is not defined \emph{per se}, but with respect to the bond reductions which produced $\wt{H}$ from $H$. 

Let $K$ denote the subgraph of $\wt{H}$ induced by $V(\wt{D})$. Consider an induced subgraph $K'$ of $K$ that contains a diamond and is vertex-minimal with that property, that is, $K' - v$ is a forest of cacti for every $v \in V(K')$.
Let $\mu$ be the maximum number of parallel edges between pairs of adjacent vertices in $K'$.

First, suppose $\mu=1$. As the reader will easily check, the minimality of $K'$ implies that either $K'$ is a simple diamond, or $K'$ is isomorphic to $K_{4}$. 

In the first case ($K'$ is a simple diamond), the primitive subgraph $S$ of $K'$ in $H$ 
can be seen as a diamond with $||K'||$ pieces. It follows from the definition of primitive subgraphs and the fact that $H$ is shaved, that $S$ is rooted in $H$. Also, we trivially have $||K'|| \leqslant ||\wt D||$; thus, $S$ has at most $||\wt D||$ pieces. Hence, $S$ satisfies (\ref{DT-diamond}).

In the second case ($K'\simeq K_{4}$), we may assume that the primitive subgraph of $K'$ is not isomorphic to $K'$ (otherwise, (\ref{DT-K4}) holds). Then, there is an edge $e\in E(K')$ such that the primitive subgraph of $e$ is a path of length at least $2$. Let then $S$ be the primitive subgraph of $K' - e$. The subgraph $S$ is induced in $H$ and is a diamond with $||K' - e||$ pieces. Since, similarly as before, $S$ is rooted in $H$ and $||K' - e|| \leqslant || \wt D||$, it follows that $S$ satisfies (\ref{DT-diamond}).

Next, assume $\mu=2$. Let $v, w$ be two vertices of $K'$ that are connected by two parallel edges $e_{1}, e_{2}$ in $K'$. Let $P$ be a shortest $v$--$w$ path in the graph $K' - \{e_{1}, e_{2}\}$. (Observe that such a path exists, since $K'$ contains a spanning diamond.) Then, all vertices of $K'$ are included in $P$. Thus, $K'$ is a simple cycle with some (and at least one) of its edges replaced by pairs of parallel edges. Let $S$ be the primitive subgraph of $K'$ in $H$. Then, $S$ is a necklace rooted in $H$ and has at most $|K'| \leqslant ||K'|| \leqslant ||\wt D||$ pieces. Hence, $S$ satisfies (\ref{DT-cycle}).

Now, suppose $\mu \geqslant 3$. Then, $K'$ consists of two vertices $v$ and $w$ connected by $\mu$ parallel edges. If $H$ contains a pair of vertices with at least four parallel edges between them, then (\ref{DT-parallel}) holds for the obvious choice of $S$, and we are done. Thus, assume there are at most three edges between any two vertices in $H$. 

Let $S$ be the primitive subgraph of $K'$. If $K'$ contains no pair of twin edges, then $S$ consists of at most three edges between $v$ and $w$ and at least $\mu - 3$ longer paths between $v$ and $w$. By removing vertices from $S$ if necessary, we can ensure that $S$ satisfies (\ref{DT-diamond}). Otherwise, $K'$ contains at least one pair of twin edges. In this case, by removing vertices from $S$ if necessary, we can ensure that $S$ satisfies (\ref{DT-cycle}), or (\ref{DT-diamond}) if $S$ has three parallel edges between $v$ and $w$

Finally, we note that the above proof can be turned without difficulty into a polynomial-time algorithm computing $S$, thus concluding the proof.
\end{proof}

We say that an induced subgraph $S \subseteq H$ is of {\DEF type 1} ({\DEF type 2, 3, 4}, resp.) if it satisfies condition $(i)$  (condition $(ii)$, $(iii)$, $(iv)$, resp.) of the above lemma.

\subsubsection{Modifying the graph $S$}
\label{sec:modify_S}

Consider some iteration $i$ of the algorithm, the corresponding shaved graph $H$
and the induced subgraph $S$ of $H$ produced at that iteration. Here, we explain how to modify $S$ when it is not consistent with $\mathcal{L}$. If $S$ is of type 3 or 4, there is no need to modify it. Hence, we assume that $S$ is of type 1 or 2 for the rest of this section. First, we need to introduce more terminology. 

Consider a piece $Q$ of $S$ containing a cycle $C$. Then $C$ is a block of $Q$. A vertex $v$ of $C$ is said to be an {\DEF end} of the cycle $C$ if $v$ is an end of the piece $Q$ or $v$ belongs to a block of $Q$ distinct from $C$. Observe that $C$ has always two distinct ends. The cycle $C$ has also two {\DEF handles}, defined as the two $v$--$w$ paths in $C$, where $v$ and $w$ are the two ends of $C$. A handle is {\DEF trivial} if it has no internal vertex, {\DEF non-trivial} otherwise.

The two handles of $C$ are labelled {\DEF top} and {\DEF bottom} as follows. Suppose $C$ has two non-trivial handles, and compute the minimum residual cost of an internal vertex in each handle. If this minimum is achieved in exactly one handle, then this handle is the top handle. If, on the other hand, the minimum is achieved in both, then the tie is broken arbitrarily, unless the cycle $C$ was considered in a previous iteration. In this case, we ensure that the tie is always broken in the same way (actually, we can use the list $\mathcal{L}$ to determine this, see below). Now, if $C$ has only one non-trivial handle, then it is defined to be the top handle. Finally, if both handles of $C$ are trivial (that is, $C$ is a cycle of length $2$), then one of them is chosen arbitrarily and called the top handle. In each of these three cases, the bottom handle is the other handle, as expected.

Now, we may give a precise definition of $\mathcal{L}$: it is a collection of triples $(T,B,\{v,w\})$ satisfying \emph{all} of the following conditions:
\begin{itemize}
\item $v, w$ are two distinct vertices of $H$, 
\item $T$ and $B$ are two $v$--$w$ paths in $H$ that are internally disjoint, 
\item $T$ has at least one internal vertex, 
\item internal vertices of $T$ and $B$ have exactly two neighbors in $H$.
\end{itemize}
Moreover, we require that 
\begin{itemize}
\item for every two distinct triples $(T,B,\{v,w\}), (T',B',\{v',w'\})$ in  $\mathcal{L}$, no two of the four paths $T, T', B, B'$ have an internal vertex in common.
\end{itemize}

The graph $S$ is {\DEF consistent with} a triple $(T,B,\{v,w\})$ in $\mathcal{L}$ if \emph{any} of the following conditions is satisfied:
\begin{enumerate}[(C1)] 
\item \label{C-disj} $S$ contains no internal vertex of any of the paths $T$, $B$,
\item \label{C-top} $S$ contains $T$ in one of its simple pieces and no internal vertex of $B$,
\item \label{C-cycle-1} $S$ is of type 1 and contains the cycle $T \cup B$,
\item \label{C-cycle-2} $S$ is of type 2 and contains $T \cup B$ in one of its (double) pieces.
\end{enumerate}
We say that $S$ is consistent with the collection $\mathcal{L}$ if $S$ is consistent with every triple in $\mathcal{L}$.

Next, we explain how to modify $S$ if it is not consistent with $\mathcal{L}$.
Roughly speaking, the purpose of this step is to ensure that no vertex in a bottom handle becomes tight before some vertex in the corresponding top handle becomes tight. This property is crucial for our analysis of the algorithm.

We iteratively modify $S$, until it becomes consistent with $\mathcal{L}$: Let $(T,B,\{v,w\})\in \mathcal{L}$ be such that $S$ is not consistent with $(T,B,\{v,w\})$. (Recall that $S$ is of type 1 or 2, by assumption.) Because (C\ref{C-disj}) is not satisfied, $S$ contains an internal vertex of $T$ or $B$. 

\begin{claim}
\label{cl:inside}
If $P$ is any path in $\{T,B\}$ such that one of its internal vertices is contained in $S$, then the whole path $P$ is contained in $S$.
\end{claim}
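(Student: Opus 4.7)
The plan is to walk from the given internal vertex $u \in V(S)$ along $P$ step by step in both directions, showing that each successive vertex lies in $V(S)$, until the endpoints $v, w$ of $P$ are reached. Since $S$ is an induced subgraph of $H$, once $V(P) \subseteq V(S)$ is established the conclusion $P \subseteq S$ follows immediately.

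Write $P = v_0 v_1 \cdots v_k$ with $v_0 = v$ and $v_k = w$. The key inductive step will be: for every $1 \leqslant j \leqslant k-1$ with $v_j \in V(S)$, both $v_{j-1}$ and $v_{j+1}$ also lie in $V(S)$. The crucial ingredient is the condition in the definition of $\mathcal{L}$ that each internal vertex $v_j$ of $P$ has exactly two distinct neighbors in $H$; necessarily these are $v_{j-1}$ and $v_{j+1}$. I then split on the role of $v_j$ in $S$. If $v_j$ is an internal vertex of a piece of $S$, the rooted condition directly gives that no edge of $H$ incident to $v_j$ leaves $V(S)$, so both $v_{j-1}, v_{j+1} \in V(S)$. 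If instead $v_j$ is an end of a piece of $S$, then $v_j$ is a branch vertex of $H$, i.e.\ $v_j \in V(\wt{H})$, and I will use the structure of $S$ as a rooted diamond (type 1) or rooted necklace (type 2) to count the edges of $S$ incident to $v_j$ (at least one first-edge per piece ending at $v_j$) and compare with the fact that $v_j$ has only two available distinct $H$-neighbors, deducing that both $v_{j-1}$ and $v_{j+1}$ must appear in $V(S)$.

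The main obstacle will be the second case, where one must handle possibly degenerate local configurations (for instance a type-1 $S$ in which several pieces reduce to single parallel edges between the same two vertices, so that an apex ends up with only one distinct neighbor in $S$). To rule out such configurations I expect to combine the rooted structure of $S$ with the property of reduced graphs that every vertex of $\wt{H}$ has either at least three distinct neighbors in $\wt{H}$ or at least three parallel edges to a single neighbor, together with the constraint from $\mathcal{L}$ that each internal vertex of $P$ has exactly two distinct $H$-neighbors, so that the problematic parallel-edge configurations are incompatible with $v_j$ being a path-internal vertex of a triple in $\mathcal{L}$.
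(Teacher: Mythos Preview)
Your inductive walk along $P$ is correct, as is the final use of $S$ being induced in $H$, but the case split on whether $v_j$ is an internal vertex or an end of a piece of $S$ is unnecessary. The paper's argument treats all vertices of $S$ uniformly in three lines: since $S$ is a diamond or a necklace, every vertex of $S$ has at least two neighbors in $S$; since an internal vertex $u$ of $P$ has exactly two neighbors in $H$ (a defining property of triples in $\mathcal{L}$) and $S\subseteq H$, both $H$-neighbors of $u$ already lie in $V(S)$. Iterate to get $V(P)\subseteq V(S)$, then conclude $P\subseteq S$ because $S$ is induced. Your Case~1 via the rooted condition is fine but already subsumed by this minimum-degree observation.

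There is also a small gap in your Case~2 plan. The tools you propose---the reduced-graph structure of $\wt H$ together with $|N_H(v_j)|=2$---do not by themselves exclude, for instance, three parallel $v_j v_{j-1}$ edges plus a single $v_j v_{j+1}$ edge: such a $v_j$ has exactly two distinct $H$-neighbors and is not reducible, yet could sit inside a two-vertex type-1 $S$ consisting of three parallel edges, which misses $v_{j+1}$. What actually rules this out is the stronger fact that internal vertices of $T$ and $B$ have \emph{degree exactly~$2$} in $H$: when the triple entered $\mathcal{L}$, each such vertex was a non-end vertex of a simple cycle inside a double piece, hence had degree $2$ there, and those two incident edges persist as long as the triple remains in $\mathcal{L}$. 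Once you use $d_H(u)=2$ rather than merely $|N_H(u)|=2$, the parallel-edge issue evaporates---and with it the need for the case split, since minimum degree $2$ in $S$ together with degree $2$ in $H$ forces both edges at $u$, hence both neighbors, to lie in $S$.
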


\begin{proof}
Let $u$ be an internal vertex of $P$ contained in $S$.  Because $u$ is an internal vertex of $P$, it has exactly two neighbors in $H$. Because $u$ is a vertex of $S$ and $S$ is a diamond or a necklace, $u$ has at least two neighbors in $S$. Thus, the two neighbors of $u$ in $H$ are in $S$. By repeating this argument, it follows that the whole path $P$ is contained in $S$.
\end{proof}

By what precedes, we may assume that some path $P \in \{T,B\}$ has at least one internal vertex in $S$. By Claim \ref{cl:inside}, $P$ is entirely contained in $S$. Because every end of a piece of $S$ is a branch vertex, all vertices of $P$ are in the same piece of $S$, say $Q$. Let $P'$ denote the other path in $\{T,B\}$.\medskip

\begin{claim}
\label{cl:special_necklace}
Assume that the cycle $P \cup P' = T \cup B$ is entirely contained in $S$. Then, $S$ is a necklace with two pieces, one piece is a cycle with ends $v$ and $w$, and the other piece is either a $v$--$w$ path or also a cycle with ends $v$ and $w$.
\end{claim}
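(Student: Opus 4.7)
The plan is to leverage the inconsistency assumption---in particular the failure of conditions (C\ref{C-cycle-1}) and (C\ref{C-cycle-2})---to pin down the structure of $S$. I would first argue that $T$ and $B$ each lie inside a single piece of $S$: since internal vertices of $T$ and $B$ have exactly two neighbors in $H$, they cannot be branch vertices of $H$, and so they must be internal to some piece of $S$. Because an internal vertex of a piece has all of its $H$-neighbors inside that piece (as $S$ is rooted in $H$), tracing along $T$ one cannot cross a piece boundary, so $T$ is contained in a single piece $Q_T$, and similarly $B \subseteq Q_B$. If $Q_T = Q_B$, then $T \cup B$ lies in a single (necessarily double) piece, triggering (C\ref{C-cycle-2}) and contradicting inconsistency; if $S$ were a diamond, then the cycle $T \cup B$ would have to be the union of two of the three pole-to-pole pieces, triggering (C\ref{C-cycle-1}). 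Hence $S$ is a necklace with $Q_T \neq Q_B$.

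Next, since $v$ and $w$ each belong to the two distinct pieces $Q_T$ and $Q_B$, and pieces of a necklace share only their common ends (which are branch vertices of $H$), both $v$ and $w$ are common ends of $Q_T$ and $Q_B$. The base cycle $C$ of the necklace therefore contains two edges joining the same pair $v,w$, which forces $C$ to be the $2$-cycle on $\{v,w\}$ and $S$ to consist of exactly the two pieces $Q_T$ and $Q_B$, both with ends $\{v,w\}$.

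The main obstacle will be ruling out the possibility that $Q_T$ or $Q_B$ is a type-(b) double piece (a cactus whose block-graph is a path). Suppose, for contradiction, that $Q_B$ has this form, with blocks $B_1,\ldots,B_m$ and cut vertices $c_1,\ldots,c_{m-1}$. Since $v$ and $w$ lie in distinct endblocks and in no other block, each $c_i$ differs from $v$ and $w$ and hence is an internal vertex of $B$, so it has degree $2$ in $H$ and therefore in $Q_B$. But $\deg_{Q_B}(c_i) = \deg_{B_i}(c_i) + \deg_{B_{i+1}}(c_i)$, and this sum equals $2$ only when both adjacent blocks are $K_2$; the same argument applied at $c_1$ and $c_{m-1}$ forces the endblocks $B_1$ and $B_m$ to be $K_2$ as well. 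But then every block of $Q_B$ is $K_2$, making $Q_B$ a path and hence a simple piece, contradicting that it is double. Finally, since $S$ is a necklace the set $Z$ in its defining construction is non-empty, so at least one of $Q_T, Q_B$ is a double piece, and by the previous step this piece must be of type~(a), i.e., a cycle with ends $v$ and $w$; the other piece is then either simple (a $v$--$w$ path) or also of type~(a), as the claim asserts.
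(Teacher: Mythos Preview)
Your argument is correct and follows the paper's approach: failure of (C\ref{C-cycle-1}) and (C\ref{C-cycle-2}) forces $S$ to be a necklace with $T$ and $B$ in distinct pieces that share both $v$ and $w$, so the base cycle is a $2$-cycle and there are exactly two pieces, each with ends $\{v,w\}$; the paper stops here with ``the rest of the claim follows easily,'' and your cutvertex-degree argument correctly spells out why neither piece can be a multi-block cactus. One small point: the property recorded in $\mathcal{L}$ is that internal vertices of $T$ and $B$ have exactly two \emph{neighbors} in $H$, not degree~$2$---these coincide here because the handles originated as subpaths of a simple cycle in a rooted piece, but since the paper allows parallel edges this step deserves a word of justification.
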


\begin{proof}
Because $S$ is not consistent with $(T,B,\{v,w\})$, neither (C\ref{C-cycle-1}) nor (C\ref{C-cycle-2}) is satisfied. It follows that $S$ is a necklace, and the $v$--$w$ paths $P$ and $P'$ are contained in different pieces of $S$. In particular, $S$ has two distinct pieces sharing two distinct vertices, namely $v$ and $w$. This implies that $S$ has exactly two pieces, and $v$ and $w$ are the ends of both pieces. One of the pieces contains $P$ (namely, $Q$) and the other contains $P'$. The rest of the claim follows easily.
\end{proof}

\noindent \emph{Case 1.} No internal vertex of $P'$ is contained in $S$.

First, assume that the path $P'$ has no internal vertex. Then, the whole cycle $P \cup P'$ is contained in $S$ because $S$ is induced in $H$. By Claim \ref{cl:special_necklace}, we can transform $S$ into a diamond, by deleting a (possibly empty) subset of the vertices of $S$.

Second, assume that $P'$ has at least one internal vertex. Because $v$ and $w$ are vertices of $S$ that have at least one neighbor outside $S$, they are branch vertices. Thus, $v$ and $w$ are the ends of the piece $Q$. 

If $Q$ is simple, then we have $P = B$ and $P' = T$ because otherwise $S$ would be consistent with $(T,B,\{v,w\})$. In this case, we redefine $S$ as the subgraph of $H$ induced by $(V(S) - V(B)) \cup V(T)$. (Thus, we ``replace'' $B$ with $T$ in $S$.) 

If $Q$ is double, then $Q$ is a cycle with ends $v$ and $w$, and $S$ is a necklace. Then, we transform $S$ into a diamond, by redefining $S$ as $Q \cup P'$.\medskip

\noindent \emph{Case 2.} Some internal vertex of $P'$ is contained in $S$. 

By Claim \ref{cl:inside}, it follows that all vertices of $P'$ are contained in $S$. Hence, $S$ contains the whole cycle $P \cup P'$. Again, by Claim \ref{cl:special_necklace}, we can transform $S$ into a diamond, by deleting a subset of the vertices of $S$.\medskip

In all the cases above, we either transform $S$ into a diamond, or make $S$ consistent with $(T,B,\{v,w\})$ without creating a new inconsistency. It follows that the modification process is finite, and produces a new induced subgraph $S$ still satisfying the requirements of Lemma \ref{lem:diamond-types}, that is moreover consistent with all triples in $\mathcal{L}$. Clearly, modifying $S$ can be done in polynomial time.

\subsubsection{Computing the inequality}
\label{def_logn_compute_ieq}

Let $C$ be a cycle contained in a piece $Q$ of $S$. If at least one handle of $C$ is non-trivial,
then we denote by $\trc(C)$ (resp.\ $\brc(C)$) the minimum residual cost of an internal vertex in the top handle (resp.\ bottom handle) of $C$. The convention is that $\brc(C) = \infty$ if the bottom handle
is trivial.
It will be convenient to say that a vertex {\DEF belongs} to a handle if it is an internal vertex
of that handle.

To the graph $S$ we associate a unique {\DEF blended diamond inequality} of the form
\begin{equation}
\label{eq:p_d_ineq}
\sum_{v \in V(G)} a_{i,v}\,x_v \geqslant 1,
\end{equation}
whose support is contained in the vertex set of $S$. For convenience, we call $S$ the {\DEF support graph}\footnote{This is an abuse of notation, since the support of the inequality is not always equal to $S$.} of the inequality. If $S$ is of type 1, 3, or 4, we let
$$
a_{i,v} := \left\{ 
\begin{array}{ll}
1 &\textrm{if }v \in V(S),\\
0 &\textrm{if }v \in V(G) - V(S).
\end{array}\right.
$$
If $S$ is of type 2, we choose a cycle of $S$ and declare it to be {\DEF special}, as follows:
If there is a cycle $C$ in a piece of $S$ such that $\trc(C)=\brc(C)$, then we select such a cycle.
Otherwise, we select an arbitrary cycle contained in a piece of $S$.
Then, we let
$$
a_{i,v} := \left\{
\begin {array}{ll}
   1 &\mbox {if $v \in V(S)$ and $v$ belongs to no handle}, \\
   0
   &\mbox {if $v$ belongs to the top handle of a non-special cycle $C$ and $\trc(C) < \brc(C)$}, \\
   1
   &\mbox {if $v$ belongs to the bottom handle of a non-special cycle $C$ and $\trc(C) < \brc(C)$}, \\
   \frac{1}{2} &\mbox {if $v$ belongs to a handle of a non-special cycle $C$ and $\trc(C) = \brc(C)$}, \\
   1 &\mbox {if $v$ belongs to a handle of the special cycle}, \\      
   0  &\mbox {if } v \in V(G) - V(S). \\
\end{array} \right.
$$

\begin{lemma}
Every blended diamond inequality~\eqref{eq:p_d_ineq} is implied by the diamond and non-negativity inequalities.
\end{lemma}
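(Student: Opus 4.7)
The plan is to exhibit every blended diamond inequality as a non-negative combination of diamond inequalities; together with the non-negativity constraints this yields the claim. I will treat the four possible types of the support graph $S$ in turn, with the real work concentrated in the type~2 case.

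If $S$ is of type~1, 3, or~4 then the blended inequality is simply $\sum_{v\in V(S)} x_v \geqslant 1$, and this is itself the diamond inequality for a diamond contained in $S$ on the whole vertex set of $S$: $S$ is already a diamond in type~1; in type~3 any three of the (at least four) parallel edges between the two vertices of $S$ form a diamond on $V(S)$; and in type~4 one obtains a diamond on all four vertices of $S\cong K_4$ by taking any edge together with the two length-two paths through the remaining two vertices. In each case the diamond inequality coincides with the blended inequality.

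For type~2, where $S$ is a necklace and the coefficients $a_{i,v}$ take values in $\{0,\tfrac12,1\}$, my plan is to average a carefully chosen family of diamond inequalities. Let $C^*$ be the special cycle, and let $\mathcal{I}$ be the set of non-special cycles $C$ in pieces of $S$ that satisfy $\trc(C)=\brc(C)$. For every choice function $\sigma \colon \mathcal{I} \to \{\mathrm{top},\mathrm{bot}\}$, I build a diamond $D_\sigma \subseteq S$ whose three internally disjoint paths between the two ends $v^*,w^*$ of $C^*$ are: the top handle of $C^*$; the bottom handle of $C^*$; and an \emph{outer route} $R_\sigma$ that loops around the rest of the necklace, traversing every simple piece of $S$ in full, traversing every $K_2$-block of every double piece, and for each remaining cycle $C$ in a piece of $S$ using the bottom handle of $C$ if $\trc(C)<\brc(C)$ and the handle $\sigma(C)$ if $C\in\mathcal{I}$. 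The one point that requires care is checking that $D_\sigma$ really is a diamond in $G$: this follows from the structural description of a necklace, since the piece $Q^*$ containing $C^*$ has block-graph a path, so $R_\sigma$ enters and leaves $Q^*$ at the two ends of $C^*$ after going once around the necklace, producing a $v^*$--$w^*$ path internally disjoint from both handles of $C^*$.

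To finish, I take the uniform average of the $2^{|\mathcal{I}|}$ diamond inequalities for the $D_\sigma$'s and verify, case by case on the location of a vertex $v$, that the averaged coefficient matches $a_{i,v}$: vertices outside $V(S)$ and vertices in the top handle of a non-special cycle with $\trc<\brc$ lie in no $D_\sigma$, giving averaged coefficient $0$; vertices in either handle of a cycle in $\mathcal{I}$ lie in exactly half the $D_\sigma$'s, giving $\tfrac12$; and all remaining vertices of $S$ (including the non-handle vertices, the vertices of the handles of $C^*$, and the vertices in the bottom handle of any non-special cycle with $\trc<\brc$) lie in every $D_\sigma$, giving $1$. These are exactly the coefficients of the blended diamond inequality, so it is a convex combination of diamond inequalities, as required.
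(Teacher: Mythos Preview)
Your proof is correct and follows the same approach as the paper: for types~1,~3,~4 the blended inequality is itself a diamond inequality, and for type~2 it is a convex combination of diamond inequalities built by keeping both handles of the special cycle and routing a third path around the rest of the necklace. The only difference is that you average over all $2^{|\mathcal{I}|}$ choice functions $\sigma$, whereas two diamonds already suffice: taking $\sigma_1 \equiv \mathrm{top}$ and $\sigma_2 \equiv \mathrm{bot}$ and averaging $\tfrac12 D_{\sigma_1} + \tfrac12 D_{\sigma_2}$ gives exactly the same coefficients, since for each $C \in \mathcal{I}$ one of the two diamonds uses the top handle and the other the bottom handle.
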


\begin{proof}
If $S$ is of type 1, 3, or 4, Inequality~\eqref{eq:p_d_ineq} is a diamond inequality. Otherwise, \eqref{eq:p_d_ineq} is easily seen to be a convex combination of two diamond inequalities. 
\end{proof}

\subsubsection{Taking care of collisions}
\label{collisions}

When increasing the variable $y_{i}$, the residual cost 
$c_v - \sum_{j = 1}^i  a_{j,v}\,y_j$ of every vertex $v$ in $S$ decreases, at a speed given
by the coefficient $a_{i,v}$. 
Also, for every cycle $C$ included in a piece of $S$, we have that $\trc(C)$ and $\brc(C)$ decrease,
possibly at different speeds. We could simply increase $y_{i}$
until some vertex $v$ becomes tight (that is, until its residual cost drops to $0$).
However, by doing so, it could be that $\trc(C) \leqslant \brc(C)$ no longer holds for some piece $Q$ and
some cycle $C$ in $Q$. (For instance, this would happen if $\brc(C)$ decreases much faster than $\trc(C)$.)
We will need that $\trc(C) \leqslant \brc(C)$ remains true in future iterations, so that
the top and bottom handles of $C$ do not interchange 
(this is used in the proof of Lemma~\ref{lem:heart_of_log(n)}).
For this reason, we have to keep track of `collisions', as we now explain.

A {\DEF collision} occurs if, for some cycle $C$ in a piece of $S$, we had $\trc(C) < \brc(C)$
at the beginning of the iteration, and $\trc(C)$ and $\brc(C)$ become equal
when increasing $y_{i}$. As mentioned in the algorithm, we stop
increasing $y_{i}$ when a collision occurs. 
If no vertex became tight during the current iteration, then the algorithm
will simply keep the same graph $S$ in the next iteration, but will change the coefficient 
of every vertex $v$ belonging to a handle of $C$. The new coefficients
are equal to $1/2$ (or to $1$ if $C$ becomes the special cycle); 
in particular, $\trc(C)$ and $\brc(C)$ will decrease at the same speed in the future.

Finally, let us point out that the number of consecutive iterations executed by the algorithm
before some vertex becomes tight is bounded by the number of cycles contained in pieces of $S$, since
a cycle can be involved in at most one collision. Hence, the total number
of iterations of the algorithm is polynomial.

\subsubsection{Updating the list $\mathcal{L}$ (first pass)}
\label{def_logn_update_L_1}

We update the collection $\mathcal{L}$ as follows.  
If $S$ is of type 1, 3, or 4, we leave $\mathcal{L}$ unchanged.
If, on the other hand, it is of type 2, we add to $\mathcal{L}$ all triples $(T,B,\{v,w\})$ that were not yet present in $\mathcal{L}$, such that $T$ is the top handle, $B$ is the bottom handle and $\{v,w\}$ are the ends of some cycle contained in a piece of $S$, and $T$ has at least one internal vertex. Because $S$ is consistent with the original list $\mathcal{L}$, the updated list $\mathcal{L}$ satisfies the required properties.

\subsubsection{The insertion order}
\label{def_logn_order}

When several vertices are added to $X$ in the same iteration, we pick an enumeration of the triples of $\mathcal{L}$, say $(T_1,B_1,\{v_1,w_1\})$, \ldots, $(T_\ell,B_\ell,\{v_\ell,w_\ell\})$ having the property that all triples $(T_i,B_i,\{v_i,w_i\})$ such that $X$ contains an internal vertex of both $T_i$ and $B_i$ come first, but otherwise arbitrarily. Then, we insert last all vertices $u$ such that $u$ is an internal vertex of $T_1$, followed by all vertices $u$ such that $u$ is an internal vertex of $B_1$, followed by all vertices $u$ such that $u$ is an internal vertex of $T_2$, and so on, ending with all vertices $u$ such that $u$ is an internal vertex of $B_\ell$.

\subsubsection{Updating the list $\mathcal{L}$ (second pass)}
\label{def_logn_update_L_2}

After vertices have been added to $X$ (hence, deleted from $H$), we remove from $\mathcal{L}$ all triples $(T,B,\{v,w\})$ such that $V(T)$ or $V(B)$ has a non-empty intersection with the new solution $X$. We also remove all triples $(T,B,\{v,w\})$ such that one of the vertices of $V(T)$ or $V(B)$ will be removed when shaving $G-X$.

\subsection{Analysis of the algorithm}
\label{def_logn_analysis}

Before proceeding with the analysis of the algorithm, we need a lemma.

\begin{lemma}
\label{lem:simple_charging}
Consider a $2$-connected graph, some of whose vertices and edges are marked. If no marked vertex is incident to a marked edge, then the total number of marked vertices and edges is at most the total number of edges.
\end{lemma}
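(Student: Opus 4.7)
The plan is to show that one can injectively assign to each marked vertex an incident edge (which will automatically be unmarked, by hypothesis), pairwise distinct across marked vertices. Once this injection $\phi \colon V_m \to E \setminus E_m$ is produced, where $V_m$ and $E_m$ denote the sets of marked vertices and edges respectively, we immediately obtain
\[
|V_m| + |E_m| \;\leqslant\; |E \setminus E_m| + |E_m| \;=\; |E|,
\]
which is the desired bound.

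The existence of such an injection will follow from Hall's marriage theorem, applied to the bipartite graph $B$ whose parts are $V_m$ and $E$, with $v \in V_m$ joined to $e\in E$ whenever $v$ is an endpoint of $e$. By the hypothesis that no marked vertex is incident to a marked edge, every neighbor in $B$ of a vertex of $V_m$ lies in $E\setminus E_m$, so a matching saturating $V_m$ automatically uses only unmarked edges.

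To verify Hall's condition, fix $S \subseteq V_m$ and let $N(S)$ be the set of edges of $G$ incident to $S$. Let $e_S$ be the number of edges with both endpoints in $S$ and $e_{\partial S}$ the number with exactly one endpoint in $S$, so $|N(S)| = e_S + e_{\partial S}$. Since $G$ is $2$-connected every vertex has degree at least $2$, hence
\[
2e_S + e_{\partial S} \;=\; \sum_{v \in S} \deg_G(v) \;\geqslant\; 2|S|.
\]
If $e_S \geqslant |S|$, then $|N(S)| \geqslant |S|$ trivially; otherwise $e_{\partial S} \geqslant 2|S| - 2e_S$, and therefore $|N(S)| = e_S + e_{\partial S} \geqslant 2|S| - e_S > |S|$. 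In either case Hall's condition holds, and the required injection exists.

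The only place where real care is needed is the simple double-counting above, since parallel edges are allowed in the paper and one must be careful not to over/under-count edges inside $S$; the case split on whether $e_S \geqslant |S|$ handles this cleanly, so I do not expect any genuine obstacle.
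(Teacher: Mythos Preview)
Your proof is correct. The Hall's condition check is clean and handles parallel edges without issue; once the saturating matching exists, the hypothesis guarantees the matched edges are all unmarked, and the count follows.

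The paper takes a more elementary route that avoids Hall's theorem entirely. It argues by a direct mark-transfer procedure: if not every vertex is marked, connectivity gives an edge $uv$ with $u$ unmarked and $v$ marked (so $uv$ is unmarked), and one moves the mark from $v$ to $uv$; iterating eliminates all vertex-marks without changing the total count. The base case where every vertex is marked is handled by the inequality $|V| \leqslant |E|$ for $2$-connected graphs. So the paper uses $2$-connectivity only through $|V| \leqslant |E|$ together with ordinary connectivity, whereas you use it only through the minimum-degree-$2$ condition. Your argument is arguably more structural and immediately generalizes to any graph of minimum degree at least~$2$ (connected or not); the paper's argument is self-contained, constructive, and needs no external lemma.
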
 

\begin{proof}
Consider any feasible assignment of marks to vertices and edges of a $2$-connected graph. First, suppose that all vertices are marked and, therefore, that no edge is marked. Since the graph is $2$-connected, the number of vertices is at most the number of edges, thus the result follows. 

Now, suppose that there exists an unmarked vertex. If no vertex of the graph is marked, we are done. Otherwise, by connectivity, there exists an edge $uv$ such that $u$ is unmarked and $v$ is marked, hence $uv$ is unmarked. Unmark $v$ and mark $uv$. This operation does not change the total number of marked elements and, by applying it iteratively as long as there exists a marked vertex, we eventually obtain a graph without any marked vertex. The result follows.
\end{proof}

\begin{lemma}
\label{lem:heart_of_log(n)}
Let $X$ be the hitting set output by the algorithm. Then
\begin{equation}
\label{eq:heart_of_log(n)}
\sum_{v \in X} a_{i,v} \leqslant \left(12 \log_{3/2} n + 16\right)\beta_i
\end{equation}
for all $i \in \{1,\ldots,k\}$.
\end{lemma}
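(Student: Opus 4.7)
The plan is to prove $\sum_{v \in X} a_{i,v} \le 2\,||\wt D||$ at iteration $i$, which suffices since $\beta_i = 1$ and $||\wt D|| \le 6 \log_{3/2} n + 8$. The core tool is the reverse-delete step: if $v \in X$ is inserted at iteration $j$ and survives reverse delete, there is a \emph{witness} diamond $D_v \subseteq G - X_{j-1}$ with $V(D_v) \cap X = \{v\}$. For $v \in X \cap V(S)$ one has $j \ge i$ (since $V(H_i) \cap X_{i-1} = \varnothing$), and because shaving does not destroy diamonds, $D_v \subseteq H_i$.

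The structural workhorse is a degree-$2$ expansion argument. Since $S$ is rooted in $H_i$, every vertex internal to a piece of $S$, and every vertex internal to a handle of a cycle inside a double piece, has exactly two neighbors in $H_i$, both along that piece (respectively handle). If such a vertex $u$ lies in $X$, its witness $D_u$ must contain those two neighbors; iterating the degree-$2$ property, $D_u$ extends along the full path of degree-$2$ vertices around $u$ until reaching a vertex of higher $H_i$-degree. Two consequences follow: (a) each piece or handle contains at most two internal $X$-vertices, and (b) if a simple (path) piece $P$ has an internal $X$-vertex, then $D_u$ reaches both ends of $P$, forcing both ends out of $X$.

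Types 3 and 4 are immediate: $|V(S)| \le 4$ and each coefficient is at most $1$, while $||\wt D|| \ge 3$. For type 1 I apply Lemma~\ref{lem:simple_charging} to the 2-connected diamond $\wt D$: mark a vertex whenever the corresponding branch of $S$ is in $X$, and mark an edge whenever the corresponding piece of $S$ carries an internal $X$-vertex. By (b) no marked vertex is incident to a marked edge, so the lemma gives $(\text{marked vertices}) + (\text{marked edges}) \le ||\wt D||$. Marked vertices contribute $1$ each, and marked edges at most $2$ by (a); hence $\sum a_{i,v} \le 2\,||\wt D||$.

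For type 2 I apply the same scheme to $K'$ (a cycle with possibly doubled edges, hence 2-connected, with $||K'|| \le ||\wt D||$). The delicate point is that a double piece of $S$ can host several cycles together with cutvertices. The coefficients $\{0, \tfrac12, 1\}$ on handle vertices are calibrated precisely so that every non-special cycle inside a piece contributes at most $2$ to $\sum a_{i,v}$; the unique special cycle, chosen with $\trc = \brc$ whenever possible, contributes at most $4$. Because a double piece is represented by two twin edges in $K'$, those two $K'$-marks absorb the extra contribution coming from the special cycle, from multiple cycles within the same piece, and from cutvertex vertices in $X$, keeping the average contribution per marked $K'$-edge at most $2$. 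The consistency of $S$ with $\mc L$ and the prescribed insertion order preserve the top/bottom labelling of each cycle across iterations, so the per-cycle estimate remains valid for the final $X$. Combined with Lemma~\ref{lem:simple_charging} on $K'$, this yields $\sum a_{i,v} \le 2\,||K'|| \le 2\,||\wt D||$. The main obstacle is this type 2 bookkeeping: verifying that the two doubled-edge marks, together with the fractional coefficients and the special-cycle rule, really cover the contribution of a double piece that simultaneously hosts the special cycle, several other cycles, and cutvertex $X$-vertices.
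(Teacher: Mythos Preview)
Your overall structure matches the paper's, and types 1, 3, 4 are fine (for type~1 the paper in fact gets contribution~$1$ per marked edge rather than~$2$, since a simple piece can hold at most one internal $X$-vertex---your own degree-$2$ expansion argument already gives this).

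The obstacle you flag in type~2 is real, and it is not just bookkeeping. A double piece $Q$ of a necklace is a path-shaped cactus that may contain an \emph{unbounded} number of cycles. Your per-cycle estimate ``every non-special cycle contributes at most~$2$'' is correct, but summing it over the cycles of $Q$ gives a bound proportional to the number of cycles, whereas the two twin edges of $K'$ representing $Q$ only supply a budget of~$4$. So the ``absorption'' you hope for fails at this level of granularity: nothing you have established prevents a double piece with ten cycles from contributing up to~$20$.

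The paper closes this via a structural dichotomy (Claim~\ref{claim:pieceX}): for every piece $Q$, the set $X$ meets the interior of $Q$ in one of four ways---not at all, in a single cutvertex, in two vertices in opposite handles of a single cycle, or in exactly one vertex per cycle. In the last case the paper shows (Claim~\ref{claim:piece_contrib}), using precisely the consistency with $\mathcal L$ and the insertion order, that each such $X$-vertex lies in the \emph{top} handle of its cycle and hence has coefficient~$0$ whenever $\trc(C)<\brc(C)$; moreover, at most one cycle of $Q$ can satisfy $\trc(C)=\brc(C)$. This forces the total contribution of the internal vertices of $Q$ to be at most~$1$ (or~$2$ if $Q$ hosts the special cycle), \emph{independently of how many cycles $Q$ contains}. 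You invoke the $\mathcal L$-machinery only to ``preserve the top/bottom labelling across iterations'', but the decisive work it does is to push the surviving $X$-vertices into the zero-coefficient handles; without that step the bound cannot be made uniform in the number of cycles, and the argument does not close.
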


\begin{proof}
First, we recall that $\beta_i = 1$ for all the inequalities in the working LP relaxation.

Consider the $i$th iteration of the algorithm.
In what follows, $H$ stands for the graph $H$ at the $i$th iteration, 
and $S$ is the support graph of the $i$th inequality of the working LP.

If $S$ is {\em not} of type 2, then the left hand side of \eqref{eq:heart_of_log(n)} is simply the number of vertices of $X$ contained in $S$.
It follows that, if $S$ is of type 3, then the left hand side is at most $2$; and, if $S$ is of type 4, then the left hand side is at most $4$.

Next, we consider the case where the type of $S$ is 1 or 2.

\begin{claim}
\label{claim:pieceX}
Let $Q$ be a piece of $S$. Exactly one of the following four cases occurs:
\begin{enumerate}[(a)]
\item\label{enum:none} $X$ contains no internal vertex of $Q$,
\item\label{enum:cut} $X$ contains exactly one vertex of $Q$, and this vertex is a cutvertex of $Q$,
\item\label{enum:opposite} $X$ contains exactly two vertices of $Q$, and they belong to opposite handles of a cycle of $Q$,
\item\label{enum:lots} $X$ contains exactly one vertex per cycle of $Q$, each belonging to some handle of the corresponding cycle.
\end{enumerate}
\end{claim}
\begin{proof}
Let $Z$ be the set of vertices that $X$ contained at the beginning of the $i$th iteration.
(Thus, $Z$ is the set of vertices that became tight at some iteration $j$ with $j<i$.)
Every vertex $u \in V(Q) \cap X$ has a corresponding {\DEF witness}, namely,
a diamond $D_{u} \subseteq G$ such that $V(D_u) \cap (X \cup Z) = \{u\}$.
(Such a subgraph exists because $u$ was kept during the reverse delete step.)
This witness $D_{u}$ cannot contain any vertex that was removed during the shaving of $G - Z$ at
the beginning of iteration $i$ since, by definition, these vertices are not included
in any diamond of $G - Z$. Hence, the diamond $D_{u}$ is a subgraph of $H - (X - \{u\})$.

Suppose $X$ contains some internal vertex $u$ of $Q$ (otherwise, 
\eqref{enum:none} trivially holds).
Let $w_{1}$ and $w_{2}$ be the two ends of $Q$.
Since no internal vertex of $Q$ has a neighbor in $V(H) - V(Q)$ in $H$, the diamond
$D_{u}$ must contain $w_{1}$, $w_{2}$, and every cutvertex of $Q$. 
Thus, $X$ contains none of $w_{1}$ and $w_{2}$.

If $u$ is a cutvertex of $Q$, then there cannot be another internal vertex $v$ of $Q$ in $X$, for otherwise $D_{v}$ contains $u$. 
Hence, \eqref{enum:cut} holds in this case.

Now, assume that $u$ is not a cutvertex of $Q$. By the previous observation, we may also assume
that no other vertex in $X$ is a cutvertex of $Q$.
This implies that each vertex in $X \cap V(Q) - \{w_{1},w_{2}\}$ belongs
to a handle of a cycle in $Q$. Let $A_{1}, A_{2}$ be the two handles
of the cycle in $Q$ containing $u$, with $u\in V(A_{1})$. 

If $A_{2} \not\subseteq D_{u}$,
then $X$ must contain some internal vertex $v$ of $A_{2}$: Otherwise, replacing
the path $A_{1}$ with $A_{2}$ in $D_{u}$ gives a diamond in $G-X$, a contradiction.
It follows that $u$ and $v$ are the only internal vertices of $Q$ included $X$, since $\{u, v\}$
separates $w_{1}$ from $w_{2}$ in $Q$. Hence, \eqref{enum:opposite} holds.

Finally, suppose $A_{2} \subseteq D_{u}$. In this case, $X$ contains no internal vertex of $A_{2}$. Moreover, $D_{u}$ contains exactly one handle of each cycle in $Q$ that is distinct from $A_{1} \cup A_{2}$.
Consider such a cycle $C$. The set $X$ must contain some internal vertex of the handle of $C$ that is not in $D_{u}$ (since otherwise we could again find a diamond in $G-X$ using that handle). Furthermore, $X$ contains at most two internal vertices of $C$. Otherwise, either $X$ contains two vertices $v$, $v'$ belonging to opposite handles of $C$, and $\{v,v'\}$ would separate $u$ from $w_1$ or $w_2$; or $X$ contains two vertices $v$, $v'$ belonging to the same handle of $C$, and $D_{v}$ would contain $v'$. Therefore, \eqref{enum:lots} holds.
\end{proof}

If $S$ is of type 1, then by Claim~\ref{claim:pieceX} we can see the vertices in $X\cap V(S)$ as marks
on the pieces of $S$ such that, if the interior of a piece is marked, then none of
its ends are. Hence, by Lemma \ref{lem:simple_charging}, 
$|X \cap V(S)|$ is at most the number of pieces in $S$, which in turn is at most
$6 \log_{3/2} n + 8$.

From now on, we assume that $S$ is of type 2. We split the left hand side of \eqref{eq:heart_of_log(n)} into two parts: the vertices that are internal vertices of some piece of $S$, and the branch vertices. 

\begin{claim}
\label{claim:piece_contrib}
Consider a piece $Q$ of $S$. The contribution of the internal vertices of $Q$ to the left hand side of \eqref{eq:heart_of_log(n)} is at most $1$ if $Q$ does not contain the special cycle of $S$, and at most $2$ if $Q$ contains the special cycle of $S$.
\end{claim}
\begin{proof}
By our choice of coefficients for inequality \eqref{eq:heart_of_log(n)}, the claim holds in cases (\ref{enum:none}), (\ref{enum:cut}) and (\ref{enum:opposite}) of Claim~\ref{claim:pieceX}. Thus, we assume that case (\ref{enum:lots}) holds. First, we show that $X$ contains exactly one vertex belonging to the top handle of each cycle contained in $Q$.

We use the following property: For all $(T,B,\{v,w\})$ in $\mathcal{L}$, the minimum residual cost of a vertex belonging $T$ is less than or equal to the minimum residual cost of a vertex belonging to $B$. This holds when $(T,B,\{v,w\})$ is added to $\mathcal{L}$ (by the definition of top and bottom handles). The inequality is maintained as long as the support graph $S$ is unchanged (see Section \ref{collisions}). Moreover, the inequality is also maintained when a new support graph $S$ is chosen, because we ensure that $S$ is always consistent with $\mathcal{L}$ (see Section \ref{sec:modify_S}). Consequently, the inequality is maintained through all subsequent iterations.

Suppose, by contradiction, that there is a cycle $C$ inside the piece $Q$, with top handle $T$ and bottom handle $B$, such that $X$ contains a vertex belonging to $B$, say $u$. Because case (\ref{enum:lots}) holds, (the final set) $X$ contains no vertex belonging to $T$. Because of the property above, at the iteration in which $u$ was added to (the evolving set) $X$, at least one vertex belonging to $T$ was also added to $X$, say $u'$. Now, because of our particular insertion order, it must be the case that $u$ was added after $u'$ in $X$. Since $u$ survived the reverse delete step, there is a diamond $D_{u}$ in $G$ witnessing the fact that $u$ is in (the final set) $X$. This diamond $D_{u}$ does not contain $u'$. Hence, $D_{u}$ can be easily transformed into a diamond $D_{u'}$ containing $u'$ and disjoint from $X$, a contradiction. It follows that $X$ contains exactly one vertex belonging to the top handle of each cycle in $Q$.

Now, we may assume that $Q$ contains at least three cycles, and at least one cycle $C$ such that $\trc(C) = \brc(C)$, because otherwise the claim trivially holds. Consider such a cycle $C$, with top handle $T$, bottom handle $B$, and ends $v$, $w$. In every subsequent iteration such that $(T,B,\{v,w\})$ survives in $\mathcal{L}$, case (C\ref{C-disj}) or (C\ref{C-cycle-2}) arises (with respect to the corresponding subgraph $S$). It follows that $\trc(C) = \brc(C)$ holds in every subsequent iteration. In particular, at the iteration in which a vertex of $T$ is added to $X$, a vertex of $B$ is also added to $X$. 
Therefore, $Q$ contains at most one cycle $C$ such that $\trc(C) = \brc(C)$. This is due to the fact that case (\ref{enum:lots}) arises and to our insertion order.

If $Q$ does not contain the special cycle, then the contribution of the internal vertices of $Q$ to the left hand side of \eqref{eq:heart_of_log(n)} is clearly at most $1/2 \leqslant 1$. If $Q$ contains the special cycle, then the contribution of the internal vertices of $Q$ to the left hand side of \eqref{eq:heart_of_log(n)} is at most $1$. (Recall that the special cycle is chosen among the cycles $C$ contained in a piece of $S$ and such that $\trc(C) = \brc(C)$, if any.)
\end{proof}

The coefficients in \eqref{eq:heart_of_log(n)} corresponding to the branch vertices of $S$ are all equal to $1$. Hence, combining Claim
\ref{claim:piece_contrib} and Lemma~\ref{lem:simple_charging}, we deduce that the left hand side of \eqref{eq:heart_of_log(n)} is at most $12 \log_{3/2} n + 16$ (since $S$ has at most $6 \log_{3/2} n + 8$ pieces).
\end{proof}

\begin{theorem}
\label{Ologn}
Algorithm~\ref{algo:logn} is a $O(\log n)$-approximation for the diamond hitting set problem.
\end{theorem}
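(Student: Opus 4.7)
The proof will follow the standard primal-dual analysis, with Lemma~\ref{lem:heart_of_log(n)} providing the key ingredient that converts the per-iteration guarantee into a global approximation ratio. The plan is first to verify that the cost of the output hitting set $X$ can be expanded exactly in terms of the dual variables $y_i$, then to swap the order of summation and invoke Lemma~\ref{lem:heart_of_log(n)}, and finally to use weak LP duality against $OPT$.

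For the first step, I would argue that every vertex $v$ that ends up in the final $X$ was added to $X$ only after becoming tight, i.e. after satisfying $\sum_{j=1}^{i} a_{j,v}\,y_j = c_v$ at the iteration $i$ of its insertion. Since $v$ is removed from the working graph as soon as it is inserted into $X$, it can no longer appear in any support graph, so $a_{j,v} = 0$ for every subsequent iteration $j > i$. Thus $c_v = \sum_{j=1}^{k} a_{j,v}\,y_j$ holds for each $v \in X$, and the reverse delete step preserves this identity since it only removes vertices from $X$.

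For the second step, I would write
\begin{align*}
\sum_{v \in X} c_v
&= \sum_{v \in X} \sum_{i=1}^{k} a_{i,v}\,y_i
= \sum_{i=1}^{k} y_i \sum_{v \in X} a_{i,v} \\
&\leqslant \left(12 \log_{3/2} n + 16\right) \sum_{i=1}^{k} \beta_i\,y_i,
\end{align*}
where the inequality is exactly Lemma~\ref{lem:heart_of_log(n)} applied termwise (using $y_i \geqslant 0$). Since $y$ is a feasible solution to (D) throughout the execution (dual variables are only raised until some primal constraint becomes tight or a collision occurs), weak duality gives $\sum_{i=1}^{k} \beta_i\,y_i \leqslant \mathrm{val}(\mathrm{LP}) \leqslant OPT$, where the last inequality holds because every blended diamond inequality is implied by the diamond inequalities (which in turn are satisfied by every feasible integer solution). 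Combining these bounds yields $\sum_{v \in X} c_v = O(\log n) \cdot OPT$.

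For the final step, I would note that the algorithm runs in polynomial time: by Lemma~\ref{lem:reduced_implies_log_diamond} the diamond $\wt D$ is found in polynomial time, Lemma~\ref{lem:diamond-types} and the modification of $S$ of Section~\ref{sec:modify_S} are polynomial, and the bound on the total number of iterations discussed at the end of Section~\ref{collisions} ensures that the main loop terminates in polynomially many steps. The only minor technical point requiring care is making sure that the equality $c_v = \sum_j a_{j,v}\,y_j$ is preserved when coefficients $a_{i,v}$ are changed at collisions; this is immediate because a collision only modifies $a_{i,v}$ for vertices $v$ that are not yet tight, so no already-tight vertex is affected. I do not expect significant obstacles: once the per-iteration bound of Lemma~\ref{lem:heart_of_log(n)} is in hand, the rest is a routine primal-dual wrap-up.
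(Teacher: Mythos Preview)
Your proposal is correct and follows exactly the paper's approach: expand $\sum_{v\in X} c_v$ using tightness, swap the order of summation, apply Lemma~\ref{lem:heart_of_log(n)}, and bound the dual objective by $OPT$ via weak duality. The paper's proof is just the one-line chain of (in)equalities you wrote out; your additional remarks about why tightness persists, polynomial running time, and the collision technicality are fine elaborations (note that a collision does not actually modify the coefficients of an already-recorded inequality but rather triggers a \emph{new} inequality at the next iteration, so the concern you raise does not even arise).
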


\begin{proof}
Letting $\alpha = 12 \log_{3/2} n + 16$, we have
$$
\sum_{v \in X} c_v \; = \; \sum_{v \in X} \Big(\sum_{i = 1}^k a_{i,v}\,y_i\Big) \; = \; \sum_{i = 1}^k \Big(\sum_{v \in X} a_{i,v}\Big)\,y_i \; \leqslant \; \sum_{i=1}^k \alpha\,\beta_i\,y_i \; \leqslant \; \alpha\,OPT,
$$
where the first equality holds because all vertices in $X$ are tight, and the first inequality follows from Lemma \ref{lem:heart_of_log(n)}. The result follows.
\end{proof}

\subsection{Integrality gap}
\label{def_logn_gap}

\begin{proposition}
The integrality gap of the LP defined by non-negativity and diamond inequalities is $\Theta(\log n)$.
\end{proposition}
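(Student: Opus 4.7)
The upper bound $O(\log n)$ on the integrality gap is immediate from Theorem~\ref{Ologn}: Algorithm~\ref{algo:logn} builds simultaneously a hitting set $X$ and a feasible dual solution $(y_i)$ of the working LP satisfying $\sum_{v \in X} c_v \le \alpha \sum_i \beta_i y_i$ with $\alpha = 12 \log_{3/2} n + 16$. Since the blended diamond inequalities are convex combinations of diamond inequalities, any solution feasible for the natural LP is also feasible for the working LP, so the working LP optimum is at most the natural LP optimum. Hence $OPT \le \alpha\,OPT_{\mathrm{LP}}$, and the gap is $O(\log n)$.

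For the matching $\Omega(\log n)$ lower bound, the plan is to exhibit an explicit family of instances. Fix $d \ge 4$ (for concreteness $d=4$) and let $(G_n)_{n \to \infty}$ be a family of \emph{simple} $d$-regular graphs with $|V(G_n)| = n$ and girth $g_n = \Omega(\log n)$. Such families are classical: they arise both from explicit expander constructions (e.g.\ Lubotzky--Phillips--Sarnak Ramanujan graphs) and from the standard probabilistic argument on random $d$-regular graphs. Give every vertex unit cost.

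On the LP side, set $x_v := 2/(3 g_n - 2)$ for every $v$. Any diamond $D \subseteq G_n$ is the union of three internally disjoint $u$--$v$ paths of lengths $\ell_1,\ell_2,\ell_3$; each pair of paths closes a cycle, so $\ell_i + \ell_j \ge g_n$ for $i \ne j$, which summed gives $\ell_1 + \ell_2 + \ell_3 \ge 3g_n/2$. Therefore $|V(D)| = \ell_1 + \ell_2 + \ell_3 - 1 \ge 3g_n/2 - 1$, making $\sum_{v \in V(D)} x_v \ge 1$. Thus $OPT_{\mathrm{LP}} \le 2n/(3g_n - 2) = O(n/\log n)$. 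On the integer side, let $X$ be any hitting set. Then $G_n - X$ is a simple diamond-free graph, i.e.\ a simple forest of cacti. A short block-decomposition shows that any such graph $H$ satisfies $|E(H)| \le 3|V(H)|/2$: in a simple connected cactus, every cycle block has at least three vertices, giving $|E| = n - 1 + c'$ with $c' \le (n-1)/2$. Combining this bound with $|E(G_n - X)| \ge dn/2 - d|X| = 2n - 4|X|$ (from $d$-regularity) yields $2n - 4|X| \le 3(n-|X|)/2$, whence $|X| \ge n/5$. So $OPT = \Omega(n)$ and the integrality gap for $G_n$ is $\Omega(n)/O(n/\log n) = \Omega(\log n)$.

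The one place where care is required is the cactus edge bound. Because the paper allows parallel edges, a cactus built from a long chain of $C_2$-blocks has almost $2n$ edges, which would wipe out the integer lower bound. Restricting $G_n$ to be \emph{simple} sidesteps this entirely, since vertex deletion preserves simplicity, while the LP argument does not care whether $G_n$ is simple. The only other ingredient that needs to be taken from outside the paper is the existence of simple $d$-regular graphs of girth $\Omega(\log n)$ for $d \ge 4$, which is a standard fact.
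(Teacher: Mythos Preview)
Your proof is correct, and your lower-bound argument is genuinely different from (and more elementary than) the paper's.

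Both proofs use the same idea on the LP side: in a graph of girth $g$, every diamond has $\Omega(g)$ vertices, so the uniform vector $x_v = \Theta(1/g)$ is LP-feasible with value $O(n/g)$. The divergence is in how $OPT = \Omega(n)$ is established. The paper invokes the spectral gap of the Lubotzky--Phillips--Sarnak graphs to lower-bound vertex expansion, then converts expansion into a treewidth lower bound, and finally uses that forests of cacti have treewidth at most $2$. Your route bypasses expansion and treewidth entirely: you just count edges. The bound $|E(H)| \le \tfrac{3}{2}|V(H)|$ for a \emph{simple} forest of cacti, combined with $4$-regularity, forces any hitting set to contain at least $n/5$ vertices. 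This needs only the existence of simple $4$-regular graphs of girth $\Omega(\log n)$, which is a weaker hypothesis than being a Ramanujan graph.

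What each approach buys: the paper's treewidth argument is degree-agnostic (it would work for $3$-regular expanders), but it imports nontrivial machinery. Your edge-counting argument is self-contained and arguably the ``right'' proof, but it genuinely needs $d \ge 4$; for $d=3$ the inequality $dn/2 - d|X| \le \tfrac{3}{2}(n-|X|)$ is vacuous. Your observation that simplicity is essential (to rule out $C_2$-chains in the cactus) is exactly the point that makes the argument go through, and is worth keeping explicit.
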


\begin{proof}
By Theorem \ref{Ologn}, we know that the integrality gap is $O(\log n)$. 
Now, we show that the integrality gap is also $\Omega(\log n)$, using expander graphs with large girth.

Let us first recall some standard notions from the theory of expanders. Let $G$ be a $d$-regular graph with $|G| \geqslant 2$. 
The {\DEF spectral gap} of $G$ is $d - \lambda(G)$, where $\lambda(G)$ denotes the second largest eigenvalue of the adjacency matrix of $G$.
The {\DEF vertex-expansion} $h_v(G)$ is defined as the minimum of
$$
\frac{|N(S) - S|}{|S|}
$$
over all subsets $S\subset V(G)$ with $|S| \leqslant \frac{|G|}{2}$. (Here, $N(S)$ denotes the set of vertices of $G$ having a neighbor in $S$.)
It is well-known that the spectral gap of $G$ can be used to derive a lower-bound on its vertex-expansion: 
\begin{equation}
\label{eq-CB}
h_v(G) \geqslant \frac{1}{2d}\left(d - \lambda(G)\right);
\end{equation}
see for instance the survey by Hoory, Linial, and Wigderson~\cite[p. 454]{HLW06}.

Lubotzky, Phillips, and Sarnak~\cite{LPS88} proved that, for $d=6$ and infinitely many values of $n$, there exists a $d$-regular graph $G$ on $n$ vertices
with $\lambda\left(G\right) \leqslant 2\sqrt{d-1}$ and girth at least $\frac{4}{3}\log_{d-1}n$ (see also Margulis~\cite{M88}, and Biggs and Boshier~\cite{BB90}). 
By~\eqref{eq-CB}, the vertex-expansion of $G$ satisfies
$$
h_v\left(G\right) \geqslant \frac{1}{2d}\left(d - 2\sqrt{d-1}\right) = \frac{1}{12}\left(6 - 2\sqrt{5}\right) > 0.
$$

It is also known that the treewidth\footnote{See Diestel~\cite{D05} for a definition.} 
$\tw(H)$ of a graph $H$ satisfies
$$
\tw(H) \geqslant h_v(H) \cdot \frac{n}{4} - 1,
$$
see for instance Grohe and Marx~\cite [Proposition~1]{GM09}.  It follows that 
$$
\tw\left(G\right) \geqslant \xi n - 1,
$$
where $\xi:=\frac{1}{48}\left(6 - 2\sqrt{5}\right)$.
Since removing a vertex from a graph decreases its treewidth by at most 1 and forest of cacti have treewidth at most 2, this implies that the minimum size of a hitting set satisfies
$$
OPT \geqslant\xi n - 3.
$$
On the other hand, the minimum size of a diamond in $G$ is at least the girth, that is, at least $\frac{4}{3}\log_{5}n$. Thus, setting 
$$
x^*_v:= \frac{3}{4\log_{5}n}
$$
yields a feasible solution of the linear relaxation. The value of the objective function for $x^*$ is $\frac{3n}{4\log_{5}n}$. Therefore, the integrality gap of the LP is at least
$$
4\log_{5}n \cdot \frac{OPT}{3n} \geqslant4\log_{5}n \cdot \frac{\xi n - 3}{3n} = \Omega(\log n).
$$
\end{proof}


\section{A 9-approximation algorithm}
\label{const_weighted}

In this section, we give a primal-dual 9-approximation algorithm for the diamond hitting set problem. We start with a description of the algorithm in Section \ref{algo}. This algorithm makes use of the sparsity inequalities. In order to describe them, we first bound the number of edges in a forest of cacti in Section \ref{bounding}; using this bound, in Sections \ref{basic} and \ref{lifted} we introduce the sparsity inequalities, prove their validity, and show that they satisfy a key inequality that we need in the analysis of the algorithm. Finally, in Section \ref{analysis}, we prove that our algorithm provides a 9-approximation for the diamond hitting set problem.

In the whole section, $\mxc$ is a global parameter of our algorithm 
which is set to $5$. 
(We remark that the analysis below could be adapted to other values of $\mxc$,
but this would not give an approximation factor better than $9$.)

\subsection{The algorithm}
\label{algo}

Our 9-approximation algorithm for the diamond hitting set problem is very similar to the $O(\log n)$-approximation algorithm. The main difference is that we use a different set of inequalities to build the working LP relaxation. (The working LP relaxation and its dual are defined in Section \ref{sec:working_LP}, on page \pageref{sec:working_LP}.) See Algorithm~\ref{algo:constant} for
a description of the algorithm.

\begin{algorithm}
\caption{\label{algo:constant}A 9-approximation algorithm.}
\begin{itemize}
\item $X \leftarrow \varnothing$; \quad $y \leftarrow 0$; \quad $i \leftarrow 0$; \quad $\mathcal{L} \leftarrow \varnothing$
\item While $X$ is not a hitting set of $G=(V,E)$, repeat the following steps:
\begin{itemize}
\item $i \leftarrow i+1$
\item Let $H$ be the graph obtained by shaving $G-X$
\item Find a reduced graph $\wt{H}$ of $H$
\item If $\wt{H}$ contains a diamond $\wt{D}$ with at most $2\mxc-1$ edges, then let $\sum_{v \in V} a_{i,v}\,x_v \geqslant \beta_i$ be a blended diamond inequality deduced from $\wt{D}$ as in Section \ref{def_logn_compute_ieq}
\item Otherwise, in $\wt{H}$, no two cycles of size at most $\mxc$ share an edge. In this case, let $\sum_{v \in V} a_{i,v}\,x_v \geqslant \beta_i$ be the extended sparsity inequality with support $V(H)$
\item Check the inequality $\sum_{v \in V} a_{i,v}\,x_v \geqslant \beta_i$ w.r.t.\ $\mathcal{L}$, modify it if necessary, add it to the working LP
\item Increase $y_i$ until some vertex becomes tight, or a collision occurs
\item Update $\mathcal{L}$
\item Add all tight vertices to $X$, in a certain order
\item Re-update $\mathcal{L}$
\end{itemize}
\item $k \leftarrow i$
\item Perform a reverse delete step on $X$
\end{itemize}
\end{algorithm}

The following sections are devoted to the definition of the extended sparsity inequalities. In the next paragraph we comment the last five steps in the main loop.

First, we only modify the current inequality $\sum_{v \in V} a_{i,v}\,x_v \geqslant \beta_i$ if it is a blended diamond inequality. This is done as described in Section \ref{sec:modify_S}. Note that none of the modifications increases the number of pieces of the support graph $S$. 

Second, the tracking of collisions is performed as before (see Section \ref{collisions}). 

Third, $\mathcal{L}$ is updated similarly as before (see Section \ref{def_logn_update_L_1}): if the support graph of the current inequality contains double pieces (in case the current inequality is an extended sparsity inequality, the support graph is the whole graph $H$), we add all triples $(T,B,\{v,w\})$ that were not yet present in $\mathcal{L}$, where $T$ is the top handle, $B$ is the bottom handle, and $v$, $w$ are the ends of a cycle contained in a double piece of the support graph. 

Finally, the insertion order and the second update of $\mathcal{L}$ are done exactly as previously (see Sections \ref{def_logn_order} and
\ref{def_logn_update_L_2}).
\subsection{Bounding the number of edges in a forest of cacti}
\label{bounding}

The following lemma provides a bound on the number of edges in a forest of cacti. For $i \in \{2, \dots, \mxc\}$, we denote by $\gamma_i(G)$ the number of cycles of length $i$ of a graph $G$.

\begin{lemma}
\label{lem:cactus_ub}
Let $F$ be a forest of cacti with $k$ components and let $\mxc \geqslant 2$. Then
$$
||F|| \leqslant \frac{\mxc+1}{\mxc} (|F| - k) + \sum_{i=2}^{\mxc} \frac{\mxc-i+1}{\mxc}\,\gamma_i(F). 
$$
\end{lemma}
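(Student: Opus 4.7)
The plan is to reduce the stated bound to a trivially true inequality by exploiting two well-known identities for a forest of cacti, and then observing that ``long'' cycles automatically contribute enough tree edges.

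First, since every block of a cactus is isomorphic to $K_1$, $K_2$, or a cycle, the cyclomatic-number formula (or equivalently a spanning-tree count) yields
$$
||F|| \;=\; |F| - k + \gamma(F),
$$
where $\gamma(F) := \sum_{i \geqslant 2} \gamma_i(F)$ is the total number of cycles. Second, by viewing a spanning forest of $F$ as obtained from $F$ by deleting exactly one edge from each cycle, one sees that
$$
|F| - k \;=\; b + \sum_{i \geqslant 2}(i-1)\,\gamma_i(F),
$$
where $b$ denotes the number of bridges (equivalently, the number of $K_2$-blocks) of $F$.

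Next, I would substitute the first identity into the desired inequality and clear the denominator $\mxc$. After moving all terms to one side and using $\gamma(F) = \sum_{i=2}^{\mxc} \gamma_i(F) + \sum_{i>\mxc}\gamma_i(F)$, the inequality rearranges to
$$
\sum_{i=2}^{\mxc}(i-1)\,\gamma_i(F) \;+\; \mxc \sum_{i>\mxc}\gamma_i(F) \;\leqslant\; |F| - k.
$$
Expanding $|F|-k$ by the second identity, the contributions from cycles of length $i \in \{2,\dots,\mxc\}$ cancel on both sides, so this is equivalent to
$$
\mxc \sum_{i>\mxc}\gamma_i(F) \;\leqslant\; b + \sum_{i>\mxc}(i-1)\,\gamma_i(F).
$$
Since $i-1 \geqslant \mxc$ whenever $i > \mxc$, and $b \geqslant 0$, the last inequality is immediate, finishing the proof.

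There is no real obstacle here: once the correct identities for $||F||$ and $|F|-k$ in a forest of cacti are written down, the lemma is essentially an accounting step expressing the fact that a cycle of length $i$ contributes exactly $i-1$ edges to any spanning forest, so longer cycles ``pay for themselves'' in the $\frac{\mxc+1}{\mxc}(|F|-k)$ term, while short cycles are precisely the ones that need the correction $\sum_{i=2}^{\mxc}\frac{\mxc-i+1}{\mxc}\gamma_i(F)$.
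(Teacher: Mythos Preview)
Your proof is correct and follows essentially the same approach as the paper: both arguments rest on the cyclomatic identity $||F|| = |F| - k + \gamma(F)$ together with the edge-disjointness of the cycles in a forest of cacti. The only cosmetic difference is that the paper bounds $\gamma_{>\mxc}(F)$ first via the inequality $||F|| \geqslant \sum_{i=2}^{\mxc} i\,\gamma_i(F) + (\mxc+1)\,\gamma_{>\mxc}(F)$ and then substitutes back, whereas you work with the exact block-decomposition identity $|F|-k = b + \sum_{i\geqslant 2}(i-1)\gamma_i(F)$ and reduce directly to $\mxc\sum_{i>\mxc}\gamma_i(F) \leqslant b + \sum_{i>\mxc}(i-1)\gamma_i(F)$; these are two rearrangements of the same accounting.
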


\begin{proof}
Denote by $\gamma_{> \mxc}(F)$ the number of cycles of $F$ whose length exceeds $\mxc$. We have
\begin{equation}
\label{eq:cactus_eq}
||F|| = |F| - k + \sum_{i=2}^{\mxc} \gamma_i(F) + \gamma_{> \mxc}(F).
\end{equation}
In the right hand side, the first two terms represent the number of edges in a spanning forest of $F$, while the last terms give the number of edges that should be added to obtain the forest of cacti $F$.

Because every two cycles in $F$ are edge disjoint, we have
$$
||F|| \geqslant \sum_{i=2}^{\mxc} i \, \gamma_i(F) + (\mxc+1) \, \gamma_{> \mxc}(F).
$$
Combining this with \eqref{eq:cactus_eq}, we get
\begin{equation}
\label{eq:cyc_ub}
\gamma_{> \mxc}(F) \leqslant \frac{1}{\mxc} \Big (|F| - k - \sum_{i=2}^{\mxc} (i-1) \gamma_i(F) \Big).
\end{equation}
From (\ref{eq:cactus_eq}) and (\ref{eq:cyc_ub}), we finally infer
$$
||F|| \leqslant |F| - k +  \sum_{i=2}^{\mxc} \gamma_i(F) + \frac{1}{\mxc} \Big (|F| - k - \sum_{i=2}^{\mxc} (i-1)\,\gamma_i(F) \Big) \leqslant \frac{\mxc+1}{\mxc} (|F| - k) + \sum_{i=2}^{\mxc} \frac{\mxc-i+1}{\mxc}\,\gamma_i(F). 
$$
\end{proof}

\subsection{The sparsity inequalities}
\label{basic}

We define the {\DEF load} of a vertex $v$ in a graph $G$ as 
$$ 
\ell_G(v) := d_G(v) - \sum_{i=2}^{\mxc} \lambda_i \, \gamma_i(G,v),
$$
where, for $i \in \{2, \ldots, \mxc\}$, $\gamma_i(G,v)$ denotes the number of cycles of length $i$ incident to $v$ in $G$ and 
$$
\lambda_i := \frac{\mxc-i+1}{\lfloor i/2 \rfloor\,\mxc}.
$$
For $\mxc = 5$, we have
$$
\lambda_2 = \frac{4}{5}\ , \qquad \lambda_3 = \frac{3}{5}\ , \qquad \lambda_4 = \frac{1}{5}\ , \quad \textrm{and} \quad \lambda_5 = \frac{1}{10}.
$$

\begin{lemma}
\label{lem:bsc}
Let $X$ be a hitting set of a graph $G$ where no two cycles of length at most $\mxc$ share an edge. Then,
\begin{equation}
\label{bsc}
\sum_{v \in X} \Big( \ell_G(v) - \frac{\mxc+1}{\mxc} \Big) \geqslant  ||G||  - \frac{\mxc+1}{\mxc}|G| -  \sum_{i = 2}^{\mxc} \frac{\mxc-i+1}{\mxc}\,\gamma_i(G) + \frac{\mxc+1}{\mxc}.
\end{equation}
\end{lemma}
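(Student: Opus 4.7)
The plan is to apply Lemma~\ref{lem:cactus_ub} to $F := G - X$, which is a forest of cacti because $X$ hits every diamond of $G$, and then translate the resulting bound back to $G$ through a careful accounting of the edges and short cycles destroyed by $X$. Setting
\[
f(H) \;:=\; \|H\| - \frac{\mxc+1}{\mxc}\,|H| - \sum_{i=2}^{\mxc} \frac{\mxc-i+1}{\mxc}\,\gamma_i(H),
\]
and letting $k$ denote the number of components of $F$, Lemma~\ref{lem:cactus_ub} rearranges to $f(F) \leq -\tfrac{\mxc+1}{\mxc}\,k$. Assuming $F \neq \varnothing$ (so $k \geq 1$; the degenerate case $F = \varnothing$ would be handled separately), this yields $f(F) \leq -\tfrac{\mxc+1}{\mxc}$. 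Since the desired inequality reads $\sum_{v\in X}\ell_G(v) - \tfrac{\mxc+1}{\mxc}\,|X| \geq f(G) + \tfrac{\mxc+1}{\mxc}$, the bound on $f(F)$ reduces my task to proving
\[
\sum_{v\in X} \ell_G(v) - \tfrac{\mxc+1}{\mxc}\,|X| \;\geq\; f(G) - f(F).
\]

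I then expand both sides using $|G| - |F| = |X|$, $\gamma_i(G) - \gamma_i(F) = |\mathcal{C}_i(X)|$ where $\mathcal{C}_i(X)$ denotes the length-$i$ cycles of $G$ meeting $X$, and $\sum_{v\in X}\gamma_i(G,v) = \sum_{C \in \mathcal{C}_i(X)} |V(C)\cap X|$. Writing $e_X$ for the number of edges of $G$ with both ends in $X$ and invoking $\sum_{v\in X} d_G(v) = (\|G\|-\|F\|) + e_X$, the terms $\|G\|-\|F\|$ and $\tfrac{\mxc+1}{\mxc}|X|$ cancel and the inequality reduces to
\[
\sum_{i=2}^{\mxc} \frac{\mxc-i+1}{\mxc} \sum_{C \in \mathcal{C}_i(X)} \left(\frac{|V(C)\cap X|}{\lfloor i/2\rfloor} - 1\right) \;\leq\; e_X.
\]

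The hypothesis that no two cycles of length at most $\mxc$ share an edge enters now: each $X$-$X$ edge lies in at most one such cycle, so $\sum_C e_C(X) \leq e_X$, where $e_C(X)$ counts the edges of $C$ with both ends in $X$. It therefore suffices to prove, for every cycle $C$ of length $i \leq \mxc$ meeting $X$ with $j := |V(C)\cap X|$, the per-cycle inequality
\[
\frac{\mxc-i+1}{\mxc}\left(\frac{j}{\lfloor i/2\rfloor} - 1\right) \;\leq\; e_C(X).
\]
If $j \leq \lfloor i/2\rfloor$ the left-hand side is non-positive and the inequality is trivial. For $j > \lfloor i/2\rfloor$, arranging the $j$ selected vertices on the cycle into maximal runs yields the elementary lower bound $e_C(X) \geq \max(0,\,2j - i)$ (and directly $e_C(X) = 2$ in the 2-cycle case when $j = 2$, since both parallel edges are $X$-$X$ edges); the inequality then reduces to a routine numerical check separating $i$ even from $i$ odd, and in both cases it boils down to $\mxc(i-1) \geq 1-i$, which holds for all $\mxc \geq 2$ and $i \geq 2$.

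The main difficulty is isolating the correct per-cycle inequality: the coefficient $\lambda_i = (\mxc-i+1)/(\lfloor i/2\rfloor\,\mxc)$ is calibrated precisely so that the runs lower bound $e_C(X) \geq 2j - i$ exactly absorbs the excess $\lambda_i j - (\mxc-i+1)/\mxc$, and the edge-disjointness of short cycles is what packages these per-cycle bonuses into a single global term $e_X$ that can be read off from the handshake identity.
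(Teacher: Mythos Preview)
Your argument is correct and follows essentially the same route as the paper's proof: apply Lemma~\ref{lem:cactus_ub} to the forest of cacti $G-X$, use the handshake identity $\sum_{v\in X} d_G(v) = (\|G\|-\|F\|) + e_X$, and exploit the lower bound $e_C(X) \geq 2j - i$ on the number of $X$--$X$ edges inside a short cycle, combined with edge-disjointness of short cycles. The paper organises the same computation via the counts $\xi_i^j$ (number of $i$-cycles with exactly $j$ vertices in $X$) and leaves the coefficient check ``total coefficient of $\xi_i^j$ is at least $-(\mxc-i+1)/\mxc$'' to the reader; your per-cycle inequality $\tfrac{\mxc-i+1}{\mxc}(\tfrac{j}{\lfloor i/2\rfloor}-1)\leq e_C(X)$ is precisely that check unpacked. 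The paper's proof also tacitly uses $k\geq 1$ (it replaces $|G-X|-k$ by $|G-X|-1$), so your side remark about the degenerate case $F=\varnothing$ is not a gap relative to the paper.
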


We call Inequality~\eqref{bsc} a {\DEF sparsity inequality}.

\begin{proof}[Proof of Lemma~\ref{lem:bsc}]
For $i \in \{2,\ldots,\mxc\}$ and $j \in \{0,\ldots,i\}$, we denote by $\xi_i^j$ the number of cycles of $G$ that have length $i$ and exactly $j$ vertices in $X$. 

Letting $||X||$ and $|\delta(X)|$ respectively denote the number of edges of $G$ with both ends in $X$ and the number of edges of $G$ having an end in $X$ and the other in $V(G) - X$, we have 
\begin{eqnarray*}
\sum_{v \in X} \ell_G(v) & =&2 ||X|| + |\delta(X)| - \sum_{i=2}^{\mxc} \sum_{j=1}^i j \, \lambda_i \, \xi_i^j\\ 
& = & ||X|| + ||G|| - ||G - X|| -  \sum_{i=2}^{\mxc} \sum_{j=1}^i j\,\lambda_i\, \xi_i^j\\
& \geqslant & ||X|| + ||G|| - \frac{\mxc+1}{\mxc}(|G-X|-1) - \sum_{i=2}^{\mxc} \frac{\mxc-i+1}{\mxc}\, \xi_i^0 - \sum_{i=2}^{\mxc} \sum_{j=1}^i j\,\lambda_i\,\xi_i^j,
\end{eqnarray*}
where the last inequality follows from Lemma \ref{lem:cactus_ub} applied to the forest of cacti $G-X$ (notice that $\gamma_i(G-X) = \xi_i^0$).

Because no two cycles of length at most $\mxc$ share an edge and, in a cycle of length $i$, each subset of size $j$ induces a subgraph that contains at least $2j-i$ edges, we have 
$$
||X|| \geqslant  \sum_{i=2}^{\mxc} \sum_{j=1+\lfloor\frac{i}{2}\rfloor}^{i} (2j-i)\,\xi_i^j.
$$

Thus, we obtain
$$
\sum_{v \in X} \ell_G(v) \geqslant \sum_{i=2}^{\mxc} \sum_{j=1+\lfloor\frac{i}{2} \rfloor}^{i} (2j-i)\,\xi_i^j + ||G|| - \frac{\mxc+1}{\mxc}(|G-X|-1) - \sum_{i=2}^{\mxc} \frac{\mxc-i+1}{\mxc}\,\xi_i^0 - \sum_{i=2}^{\mxc} \sum_{j=1}^i j\,\lambda_i\, \xi_i^j.
$$
We leave it to the reader to check that, in the right hand side of the last inequality, the total coefficient of $\xi_i^j$ is at least $- \frac{\mxc-i+1}{\mxc}$, for all $i \in \{2,\ldots,\mxc\}$ and $j \in \{0,\ldots,i\}$. Hence,
$$ 
\sum_{v \in X} \ell_G(v)
\geqslant ||G|| - \frac{\mxc+1}{\mxc} |G| + \frac{\mxc+1}{\mxc} |X| + \frac{\mxc+1}{\mxc}  - \sum_{i = 2}^{\mxc} \frac{\mxc-i+1}{\mxc} \, \gamma_i(G).
$$
Inequality~\eqref{bsc} follows.
\end{proof}

\subsection{The extended sparsity inequalities}
\label{lifted}

Consider a shaved graph $H$ and denote by $\wt{H}$ a reduced graph of $H$. Suppose that, in $\wt{H}$, no two cycles of length at most $\mxc$ share an edge. Thus, the graph $H$ can be uniquely decomposed into simple or double pieces corresponding to edges or pairs of parallel edges in $\wt{H}$. Here, the pieces of $H$ are defined as follows: let $v$ and $w$ be two adjacent vertices of $\wt{H}$, and let $\wt{J}$ denote the subgraph of $\wt{H}$ induced by $\{v,w\}$. The primitive subgraph of $\wt{J}$ in $H$, say $J$, is a {\DEF piece} of $H$ with ends $v$ and $w$. The vertices of $H$ are of two types: the branch vertices are those that survive in $\wt{H}$, and the other vertices are internal to some piece of $H$.

Consider a double piece $Q$ of $H$ (if any) and a cycle $C$ contained in $Q$. As before, denote by $\trc(C)$ (resp.\ $\brc(C)$) the minimum residual cost of a vertex in the top handle (resp.\ bottom handle) of $C$. Also, we choose a cycle of $Q$ (if any) and declare it to be special. If possible, the special cycle is chosen among the cycles $C$ contained in $Q$ with $\trc(C) = \brc(C)$. (So \emph{every} double piece of $H$ has a special cycle.)

The {\DEF extended sparsity inequality} for $H$ reads
\begin{equation}
\label{eq:lifted}
\sum_{v \in V(H)} a_{v}\,x_v \geqslant \beta,
\end{equation}
where
$$
a_v := \left\{
\begin {array}{ll}
\displaystyle \ell_{\wt{H}}(v) - \frac{\mxc+1}{\mxc}  &\mbox {if $v$ is a branch vertex}, \\
1 &\mbox {if $v$ is an internal vertex of a simple piece}, \\
2 &\mbox {if $v$ is an internal vertex of a double piece and does not belong to any handle}, \\
0
&\mbox {if $v$ belongs to the top handle of a cycle $C$ with $\trc(C) < \brc(C)$}, \\
2
&\mbox {if $v$ belongs to the bottom handle of a cycle $C$ with $\trc(C) < \brc(C)$}, \\
1 &\mbox {if $v$ belongs to a handle of a cycle $C$ with $\trc(C) = \brc(C)$, or $C$ is special},
\end{array} \right.
$$
and
$$
\beta := ||\wt{H}||  - \frac{\mxc+1}{\mxc}|\wt{H}| -  \sum_{i=2}^{\mxc} \frac{\mxc-i+1}{\mxc} \gamma_i(\wt{H}) + \frac{\mxc+1}{\mxc}.
$$
By convention, the support graph of the extended sparsity inequality \eqref{eq:lifted} is defined to be $H$.

\begin{lemma}
Let $H$ be a graph and let $\wt{H}$ be a reduced graph of $H$ such that no two cycles of length at most 
$\mxc$ share an edge. Then Inequality~\eqref{eq:lifted} is valid, that is, 
$$
\sum_{v \in X} a_v \geqslant \beta
$$
whenever $X$ is a hitting set of $H$. 
\end{lemma}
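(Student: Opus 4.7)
The plan is to deduce the validity of the extended sparsity inequality for $H$ from the basic sparsity inequality (Lemma \ref{lem:bsc}) applied to $\wt{H}$, which is licit because $\wt{H}$ satisfies the required hypothesis by assumption. Given a hitting set $X$ of $H$, I will construct a hitting set $\wt{X}$ of $\wt{H}$ and establish the charging
$\sum_{v \in X} a_v \geqslant \sum_{v \in \wt{X}}\bigl(\ell_{\wt{H}}(v) - \tfrac{\mxc+1}{\mxc}\bigr)$;
combined with the conclusion $\sum_{v \in \wt{X}}(\ell_{\wt{H}}(v) - \tfrac{\mxc+1}{\mxc}) \geqslant \beta$ delivered by Lemma \ref{lem:bsc}, this gives the claim.

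I would first verify that every coefficient $a_v$ is non-negative. For internal vertices this is manifest. For branch vertices, the hypothesis that no two short cycles share an edge implies that each edge of $\wt{H}$ at $v$ lies in at most one cycle of length at most $\mxc$, so $\sum_i \lambda_i \gamma_i(\wt{H},v) \leqslant \lambda_2\, d_{\wt{H}}(v)/2 = \tfrac{2}{5} d_{\wt{H}}(v)$, and hence $\ell_{\wt{H}}(v) \geqslant \tfrac{3}{5} d_{\wt{H}}(v) \geqslant \tfrac{9}{5} > \tfrac{\mxc+1}{\mxc}$. We may therefore assume without loss of generality that $X$ is inclusion-wise minimal; then each $v \in X$ admits a witness diamond $D_v \subseteq H$ with $V(D_v) \cap X = \{v\}$.

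To construct $\wt{X}$, take $X \cap V(\wt{H})$ together with, for each piece $Q$ of $H$ whose interior meets $X$, one end of $Q$ (unless an end is already present). That $\wt{X}$ hits every diamond of $\wt{H}$ follows from Lemma \ref{lem:diamond-types}: any diamond $\wt{D} \subseteq \wt{H}$ corresponds to an induced subgraph $S \subseteq H$ (of type 1, 2, 3, or 4) whose branch vertices all lie in $V(\wt{D})$. Since $X$ hits $S$ either at a branch vertex of $S$ (hence in $\wt{X} \cap V(\wt{D})$ directly) or at an internal vertex of some piece of $S$ (whose adjoined end belongs to $\wt{X} \cap V(\wt{D})$), the set $\wt{X}$ meets $V(\wt{D})$.

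The heart of the proof is the charging inequality. The branch-vertex contributions from $X \cap V(\wt{H})$ appear identically on both sides, so what remains is to cover the cost of each newly added end by the internal-vertex contributions of $X$. A case analysis in the spirit of Claim \ref{claim:pieceX}, made possible by the witnesses, shows that for every piece $Q$ with $X$ meeting its interior, exactly one of four configurations of $X \cap V(Q)$ arises: (a) is impossible for a broken piece, (b) a single cutvertex of $Q$, (c) two vertices in opposite handles of a cycle of $Q$, or (d) one vertex per cycle of $Q$, each lying in a handle. In each case the aggregated internal contribution suffices to pay a fair share of the added end's cost (distributed across broken pieces incident to that end). The principal obstacle is case (d) for a double piece whose chosen vertices all lie in top handles of non-special cycles with $\trc(C) < \brc(C)$, where each such vertex carries coefficient $0$; overcoming this requires a more global accounting that exploits how $\ell_{\wt{H}}(w) - \tfrac{\mxc+1}{\mxc}$ already absorbs a correction from short cycles through $w$ in $\wt{H}$, which are themselves discounted on the $\beta$ side via the $\sum_i \frac{\mxc-i+1}{\mxc}\gamma_i(\wt{H})$ term.
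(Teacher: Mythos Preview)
Your plan to reduce the validity of the extended sparsity inequality to Lemma~\ref{lem:bsc} via a hitting set $\wt{X}$ of $\wt{H}$ does not go through: the charging inequality
\[
\sum_{v \in X} a_v \;\geqslant\; \sum_{v \in \wt{X}}\Bigl(\ell_{\wt{H}}(v) - \tfrac{\mxc+1}{\mxc}\Bigr)
\]
is false in general. After cancelling the common $Y := X \cap V(\wt{H})$ part, you are left with
\[
\sum_{v \in X - Y} a_v \;\geqslant\; \sum_{w \in \wt{X} - Y}\Bigl(\ell_{\wt{H}}(w) - \tfrac{\mxc+1}{\mxc}\Bigr).
\]
The left-hand side picks up at most $2$ from any single broken piece. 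But a vertex $w$ you adjoin to $\wt{X}$ has cost $\ell_{\wt{H}}(w) - \tfrac{\mxc+1}{\mxc}$, which is essentially $d_{\wt{H}}(w)$ and can be arbitrarily large. If only one piece incident to $w$ is broken and $w$ has large degree in $\wt{H}$, the inequality fails; your ``distribute across broken pieces incident to that end'' idea only helps when \emph{many} pieces at $w$ are broken, which you have no control over. The obstacle you flag in case~(d) (zero-coefficient top-handle vertices) is real but is a symptom rather than the root cause: even for simple pieces with coefficient~$1$, the mismatch between a unit internal contribution and an unbounded vertex-load persists.

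The paper does not try to build a hitting set of $\wt{H}$. Instead it works with $Y = X \cap V(\wt{H})$ directly, even though $Y$ need not hit all diamonds of $\wt{H}$: the graph $\wt{H} - Y$ is a forest of cacti $F$ together with some \emph{extra edges}, namely those edges of $\wt{H}$ that correspond to pieces of $H$ whose interior meets $X$. The key point is that the coefficients $a_v$ on internal vertices are calibrated so that
\[
\sum_{v \in X - Y} a_v \;\geqslant\; ||\wt{H} - Y|| - ||F||,
\]
i.e.\ the internal contributions pay for the extra \emph{edges}, each of unit cost. This is the right currency: a broken simple piece produces one extra edge and contributes at least $1$; a broken double piece produces at most two extra edges and contributes at least the corresponding amount (the case analysis of Claim~\ref{claim:pieceX} applies here). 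One then reruns the computation of Lemma~\ref{lem:bsc} in $\wt{H}$, bounding $||F||$ via Lemma~\ref{lem:cactus_ub} and absorbing the extra-edge count on the left, rather than invoking Lemma~\ref{lem:bsc} as a black box. The difference between the two approaches is precisely that the paper trades in edges (unit-cost) rather than vertices (degree-cost), and Lemma~\ref{lem:bsc} is re-proved with a slack term, not quoted.
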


\begin{proof}
Let $Y$ denote the set of branch vertices that are included in $X$. So $Y = V(\wt{H}) \cap X$. Then $Y$ is not necessarily a hitting set of $\wt{H}$. Indeed, $\wt{H} - Y$ is a forest of cacti $F$ to which a certain number of {\DEF extra} edges are added: the extra edges are those corresponding to the vertices of $X$ that are internal to some piece, that is, vertices of $X - Y$. By our choice of coefficients, 
$$
\sum_{v \in X - Y} a_v \ge ||\wt{H}-Y|| - ||F||.
$$
In other words, the left hand side is at least the number of extra edges. 

The rest of the proof closely follows the proof of Lemma \ref{lem:bsc}. This time, all computations are made within the reduced graph. Precisely, $\xi_i^j$ denotes the number of cycles of $\wt{H}$ that have length $i$ and exactly $j$ vertices in $Y$, $||Y||$ denotes the number of edges of $\wt{H}$ with both ends in $Y$ and $|\delta(Y)|$ denotes the number of edges of $\wt{H}$ having one end in $Y$ and the other in $V(\wt{H})-Y$. Then, we have
\begin{eqnarray*}
\sum_{v \in Y} \ell_{\wt{H}}(v) + \sum_{v \in X - Y} a_v& =&2 ||Y|| + |\delta(Y)| - \sum_{i=2}^{\mxc} \sum_{j=1}^i j\,\lambda_i\,\xi_i^j + 
||\wt{H}-Y|| - ||F||\\ 
& = & ||\wt{H}|| + ||Y|| - ||F|| -  \sum_{i=2}^{\mxc} \sum_{j=1}^i j\,\lambda_i\,\xi_i^j.
\end{eqnarray*}
By using Lemma \ref{lem:cactus_ub} to bound $||F||$ and the inequality $\xi_i^0 \geqslant \gamma_i(F)$, we obtain
\begin{eqnarray*}
\sum_{v \in Y} \ell_{\wt{H}}(v) + \sum_{v \in X - Y} a_v
& \geqslant & ||\wt{H}|| + ||Y|| - \frac{\mxc+1}{\mxc}(|F|-1) - \sum_{i=2}^{\mxc} \frac{\mxc-i+1}{\mxc}\, \gamma_i(F) - \sum_{i=2}^{\mxc} \sum_{j=1}^i j\,\lambda_i\,\xi_i^j\\
& \geqslant & ||\wt{H}|| + ||Y|| - \frac{\mxc+1}{\mxc}(|\wt{H}|-|Y|-1) - \sum_{i=2}^{\mxc} \frac{\mxc-i+1}{\mxc}\, \xi_i^0 - \sum_{i=2}^{\mxc} \sum_{j=1}^i j\,\lambda_i\,\xi_i^j.
\end{eqnarray*}

After performing the same steps as in the proof of Lemma \ref{lem:bsc}, in the graph $\wt{H}$ this time, we find
$$
\sum_{v \in Y} \ell_{\wt{H}}(v) + \sum_{v \in X - Y} a_v \geqslant ||\wt{H}||  - \frac{\mxc+1}{\mxc}|\wt{H}| + \frac{\mxc+1}{\mxc} |Y| + \frac{\mxc+1}{\mxc} -  \sum_{i=2}^{\mxc} \frac{\mxc-i+1}{\mxc} \gamma_i(\wt{H}).
$$
This proves the validity of the extended sparsity Inequality~\eqref{eq:lifted}.
\end{proof}

\subsection{Analysis of the algorithm}
\label{analysis}

We now prove a key inequality that will be used in the analysis of the algorithm. 

\begin{lemma}
\label{sparsity_bound}
Let $H$ be a shaved graph and let $\wt{H}$ be a reduced graph of $H$. Suppose that, in $\wt{H}$, 
every diamond has at least $2q=10$ edges. Then,
$$
\sum_{v \in X} a_{v} \leqslant  8 \, \beta
$$
for every minimal hitting set $X$ of $H$.
\end{lemma}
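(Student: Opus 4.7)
My plan is to decompose $X$ into branch vertices $Y := X \cap V(\wt H)$ and internal vertices $X' := X \setminus Y$, bound their contributions to $\sum_{v \in X} a_v$ separately, and combine the bounds through a structural argument and careful algebra. For the \emph{internal contribution}, note that the minimality of $X$ gives every $v \in X$ a witness diamond $D_v \subseteq H$ with $V(D_v) \cap X = \{v\}$; this is precisely the hypothesis powering Claim~\ref{claim:pieceX}, whose proof applies verbatim to every piece $Q$ of $H$. A case-by-case inspection of the extended-sparsity coefficients (Section~\ref{lifted}) combined with the insertion-order argument of Section~\ref{def_logn_order} (which carries over unchanged to Algorithm~\ref{algo:constant}, forcing in case~(d) the vertex selected on each non-special cycle to lie in the top handle and hence to contribute zero) yields a per-piece contribution of at most $1$ for simple pieces and at most $2$ for double pieces. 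Summing, and using that simple pieces correspond to single edges of $\wt H$ while double pieces correspond to pairs of parallel edges, we get $\sum_{v \in X'} a_v \leqslant s + 2d = \|\wt H\|$, where $s$ and $d$ denote the numbers of simple and double pieces of $H$.

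For the \emph{branch contribution}, let $E_1$ be the edges of $\wt H$ incident to $Y$ and $E_{X'}^{\mathrm{free}}$ the edges of $\wt H$ that are not incident to $Y$ but whose corresponding piece in $H$ contains a vertex of $X'$. The key structural claim is that $\wt H^* := \wt H - Y - E_{X'}^{\mathrm{free}}$ is a forest of cacti: any surviving diamond $D \subseteq \wt H^*$ would lift, via the primitive subgraphs of its edges, to a diamond-containing subgraph of $H$ disjoint from $X$, contradicting that $X$ is a hitting set. Applying Lemma~\ref{lem:cactus_ub} to $\wt H^*$ (using $\gamma_i(\wt H^*) \leqslant \gamma_i(\wt H)$, $|\wt H^*| = |\wt H| - |Y|$, $\|\wt H^*\| = \|\wt H\| - |E_1| - |E_{X'}^{\mathrm{free}}|$, and that $\wt H^*$ has at least one component) and rearranging via the definition of $\beta$ produces the inequality $|E_1| + |E_{X'}^{\mathrm{free}}| \geqslant \beta + \tfrac{q+1}{q}|Y|$.

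To assemble the final bound, note that in the extended-sparsity regime every vertex of $\wt H$ has at least three distinct neighbors and any two cycles of length at most $q$ are edge-disjoint, so $\ell_{\wt H}(v) \geqslant \tfrac{3}{5}d_{\wt H}(v) \geqslant \tfrac{9}{5} > \tfrac{q+1}{q}$. Hence $a_v \geqslant 0$ for every branch vertex and $\sum_{v \in Y} a_v \leqslant |E_1| + |E(Y,Y)| - \tfrac{q+1}{q}|Y|$, where $E(Y,Y)$ denotes edges of $\wt H$ with both ends in $Y$. Combining this with the refined internal bound $\sum_{v \in X'} a_v \leqslant |E_{X'}|$ (counting $\wt H$-edges of hit pieces rather than pieces themselves), with the structural inequality from the previous paragraph, with the minimum-degree bound $\|\wt H\| \geqslant \tfrac{3}{2}|\wt H|$, and with the edge-disjointness bound $\sum_{i=2}^{q} i\gamma_i(\wt H) \leqslant \|\wt H\|$ produces, through a careful algebraic manipulation, the inequality $\sum_{v \in X} a_v \leqslant 8\beta$. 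The \emph{main obstacle} is precisely this final algebraic combination: the constant $8$ emerges from a delicate interplay among $\tfrac{q+1}{q} = \tfrac{6}{5}$, the $\lambda_i$ coefficients for $i \leqslant q = 5$, the forest-of-cacti bound, and the reduced-graph structural bounds, and isolating exactly the factor $8$ requires precise accounting so that the slack accumulated across the various inequalities cancels.
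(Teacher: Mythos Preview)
Your proposal has a genuine gap at the final step. The ingredients you list cannot produce the factor $8$, and the issue is not merely that the algebra is delicate but unfinished.

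Concretely: your ``structural inequality'' $|E_1| + |E_{X'}^{\mathrm{free}}| \geqslant \beta + \tfrac{q+1}{q}|Y|$ is the \emph{validity} direction---it is essentially a reproof that the extended sparsity inequality holds for $X$---and therefore points the wrong way for an \emph{upper} bound on $\sum_{v\in X} a_v$. Your remaining inequalities (the per-piece bound $\sum_{v\in X'} a_v \leqslant |E_{X'}|$, the branch bound $\sum_{v\in Y} a_v \leqslant |E_1|+|E(Y,Y)|-\tfrac65|Y|$, the degree bound $\|\wt H\|\geqslant\tfrac32|\wt H|$, and the edge-disjointness bound) give at best $\sum_{v\in X} a_v \leqslant 2\|\wt H\|$. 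Take $\wt H$ to be a $3$-regular simple graph of girth at least $10$ on $n$ vertices: then $\beta = \tfrac{3n+12}{10}$ and $8\beta = \tfrac{12n+48}{5}$, while $2\|\wt H\| = 3n$; the inequality $3n \leqslant \tfrac{12n+48}{5}$ fails once $n>16$. So no combination of your stated ingredients can yield $8\beta$.

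What is missing are the two substantive pieces of the paper's proof. First, the paper does not merely bound the internal contribution from above: it establishes the \emph{equality} $\sum_{v\in X\setminus Y} a_v = \|\wt H - Y\| - \|F\|$, where $F$ is the residual forest of cacti, and this quantity then reappears with a negative sign. Second, and more importantly, the paper splits the target inequality into a purely structural ``average-load'' inequality about $\wt H$ (showing the redefined load $\ell'_{\wt H}(v)$ is always at least $\tfrac{38}{15}$) and a per-component inequality $|\delta(Y)| + \|\wt H - Y\| - \|F\| \geqslant |Y| + 2k + \tfrac35(\xi_3^0 - \gamma_3(F))$. The latter is proved via a bipartite auxiliary graph and a witness-diamond argument for each vertex of $Y$ and each extra edge, with a delicate case analysis exploiting that every diamond in $\wt H$ has at least $10$ edges (this bounds cycle lengths in $F$ and controls extra triangles). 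Your proposal uses minimality only at the piece level; the per-component witness argument that handles the $2k$ term is entirely absent, and it is not recoverable from your structural inequality.
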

%

\begin{proof}
Let $Y := V(\wt{H}) \cap X$. We note that
no two cycles of length at most $5$ in $\wt{H}$ have an edge in common,
since $\wt{H}$ has no diamond with at most $9$ edges.

We have to prove that 
\begin{equation}
\label{ineq_analysis}
\sum_{v \in Y} \Big(  \ell_{\wt{H}}(v) - \frac{\mxc+1}{\mxc}\Big)+ \sum_{v \in X - Y} a_v 
\leqslant 8 \cdot \Big(||\wt{H}|| - \frac{\mxc+1}{\mxc}|\wt{H}|  -  \sum_{i=2}^{\mxc} \frac{\mxc-i+1}{\mxc} \gamma_i(\wt{H}) + \frac{\mxc+1}{\mxc}\Big).
\end{equation}
We claim that
$$
\sum_{v \in X - Y} a_v = ||\wt{H}-Y|| - ||F||,
$$
where the sum is taken over all vertices that are included in $X$ and that are internal to some piece. 
Indeed, since $X$ is minimal, there are four ways in which $X$ can intersect internal vertices of a piece $Q$ (see Claim \ref{claim:pieceX} above):
\begin{itemize}
\item $X$ contains no internal vertex of $Q$,
\item $X$ contains exactly one vertex of $Q$, and this vertex is a cutvertex of $Q$,
\item $X$ contains exactly two vertices of $Q$, and they belong to opposite handles of a cycle of $Q$,
\item $X$ contains exactly one vertex per cycle of $Q$, each belonging to some handle of the corresponding cycle.
\end{itemize}
In the three first cases, the choice for the coefficients ensures that the equality holds. In the fourth case, using the same arguments as in Claim \ref{claim:piece_contrib} above, we prove that $X$ contains one vertex in the top handle of each cycle of $Q$.
The left hand side can be rewritten as:
\begin{eqnarray*}
\sum_{v \in Y} \Big( \ell_{\wt{H}}(v) - \frac{\mxc+1}{\mxc}\Big)+  \sum_{v \in X - Y} a_v  &= & 2 ||Y|| + |\delta(Y)| - \sum_{i=2}^{\mxc} \sum_{j=1}^i j\,\lambda_i \,\xi_i^j - \frac{\mxc+1}{\mxc}|Y|\\
&&+ ||\wt{H}-Y|| - ||F|| \\[2ex]
&\leqslant& 2||\wt{H}|| -  |\delta(Y)| - \sum_{i=2}^{\mxc} \sum_{j=1}^i j\,\lambda_i \,\xi_i^j - \frac{\mxc+1}{\mxc}|Y|\\
&& - 2 ||F|| - (||\wt{H} - Y|| - ||F||). 
\end{eqnarray*}
Let $k$ denote the number of components of the forest of cacti $F$. Since $||F|| = |F| - k + \sum_{i=2}^{\mxc} \gamma_i(F) + \gamma_{> \mxc}(F)$ and $|F| = |\wt{H}| - |Y|$, we have
\begin{eqnarray*}
\sum_{v \in Y} \Big( \ell_{\wt{H}}(v) - \frac{\mxc+1}{\mxc}\Big)+  \sum_{v \in X - Y} a_v &= & 2 ||\wt{H}|| - |\delta(Y)| - \sum_{i=2}^{\mxc} \sum_{j=1}^i j\,\lambda_i \,\xi_i^j - \frac{\mxc+1}{\mxc}|Y|\\ 
& & - 2 \Big(|\wt{H}| - |Y| - k + \sum_{i=2}^{\mxc} \gamma_i(F) + \gamma_{> \mxc}(F) \Big) \\ 
& & - (||\wt{H} - Y|| - ||E(F)||)\\ 
&\leqslant & 2 ||\wt{H}|| - 2 |\wt{H}| - \sum_{i=2}^{\mxc} \sum_{j=1}^i j\,\lambda_i \, \xi_i^j - 2 \sum_{i=2}^{\mxc} \gamma_i(F) - 2\, \gamma_{> \mxc}(F) \\
& & - |\delta(Y)| - (||\wt{H} - Y|| - ||F||) + \frac{\mxc-1}{\mxc}|Y| + 2 k. 
\end{eqnarray*}
Now, let $\nu_i$ ($i=\{2,\dots, \mxc\}$) be defined as 
$\nu_i := 0$ for $i \neq 3$ and $\nu_3 := \frac{3}{5}$.
Notice that $\nu_i \leqslant j \lambda_i$ for all $i \in \{2,\ldots,\mxc\}$ and $j \in \{1,\ldots,i\}$.
Also, $\nu_i \leqslant \frac{\mxc-i+1}{\lfloor i/2 \rfloor\,\mxc}$ 
for all $i \in \{2,\ldots,\mxc\}$. We rewrite the last inequality as 
\begin{eqnarray*}
\sum_{v \in Y} \Big( \ell_{\wt{H}}(v) - \frac{\mxc+1}{\mxc}\Big)+  \sum_{v \in X - Y} a_v &\leqslant & 2 ||\wt{H}|| - |\wt{H}| - \sum_{i=2}^{\mxc} \sum_{j=1}^i j\,\lambda_i \,\xi_i^j -2 \sum_{i=2}^{\mxc} \gamma_i(F) - 2\, \gamma_{> \mxc}(F) \\
& &- \sum_{i=2}^{\mxc} \nu_i (\xi_i^0 - \gamma_i(F)) + \sum_{i=2}^{\mxc} \nu_i (\xi_i^0 - \gamma_i(F)) \\
&& - |\delta(Y)| - (||\wt{H} - Y|| - ||F||) + \frac{\mxc-1}{\mxc}|Y| + 2 k
\\
&\leqslant & 2 ||\wt{H}|| - |\wt{H}| - \sum_{i=2}^{\mxc} \nu_i\,\gamma_i(\wt{H}) - \sum_{i=2}^{\mxc} \sum_{j=1}^i (j\,\lambda_i - \nu_i) \,\xi_i^j\\
& &  - (2 - \nu_i) \sum_{i=2}^{\mxc} \gamma_i(F) - 2\, \gamma_{> \mxc}(F) + \sum_{i=2}^{\mxc} \nu_i (\xi_i^0 - \gamma_i(F)) \\ 
&& - |\delta(Y)| - (||\wt{H} - Y|| - ||F||) + \frac{\mxc-1}{\mxc}|Y| + 2 k\\
&\leqslant & 2 ||\wt{H}|| - |\wt{H}|  - \sum_{i=2}^{\mxc} \nu_i\,\gamma_i(\wt{H}) + \sum_{i=2}^{\mxc} \nu_i (\xi_i^0 - \gamma_i(F))\\ 
&&  - |\delta(Y)| - (||\wt{H} - Y|| - ||F||) + \frac{\mxc-1}{\mxc}|Y| + 2 k. 
\end{eqnarray*}
Therefore, to prove Inequality~\eqref{ineq_analysis}, it suffices to show the following:
\begin{eqnarray*}
6 ||\wt{H}|| + \Big(2- 8 \cdot \frac{\mxc+1}{\mxc}\Big) |\wt{H}| - \sum_{i = 2}^{\mxc} \Big( 8 \cdot \frac{\mxc-i+1}{\mxc} - \nu_i\Big)\,\gamma_i(\wt{H}) + 8\cdot\frac{\mxc+1}{\mxc} \\
+ |\delta(Y)| - \frac{\mxc-1}{\mxc}|Y| + ||\wt{H} - Y|| - ||F|| -2k  - \sum_{i=2}^{\mxc} \nu_i (\xi_i^0 - \gamma_i(F))
 & \geqslant& 0.
 \end{eqnarray*}
We actually prove a slightly stronger inequality:
\begin{eqnarray*}
6 ||\wt{H}|| + \Big(2- 8 \cdot \frac{\mxc+1}{\mxc}\Big) |\wt{H}| - \sum_{i = 2}^{\mxc} \Big( 8 \cdot \frac{\mxc-i+1}{\mxc} - \nu_i\Big)\,\gamma_i(\wt{H}) \\
+ |\delta(Y)| - \frac{\mxc-1}{\mxc}|Y| + ||\wt{H} - Y|| - ||F|| -2k  - \sum_{i=2}^{\mxc} \nu_i (\xi_i^0 - \gamma_i(F))
 & \geqslant& 0.
 \end{eqnarray*}
We proceed in two steps. We first show that 
\begin{equation}
\label{part1}
      6 ||\wt{H}|| + \Big(2- 8 \cdot \frac{\mxc+1}{\mxc}\Big) |\wt{H}| - \sum_{i = 2}^{\mxc} \Big( 8 \cdot \frac{\mxc-i+1}{\mxc} - \nu_i\Big)\,\gamma_i(\wt{H}) \geqslant 0,
\end{equation}
then we show that
\begin{equation}
\label{part2}
	|\delta(Y)| + ||\wt{H} - Y|| - ||F|| \geqslant \frac{\mxc-1}{\mxc}|Y| + 2k + \sum_{i=2}^{\mxc} \nu_i (\xi_i^0 - \gamma_i(F)).
\end{equation}

\medskip
Let us start with Inequality~(\ref{part1}). By multiplying by $\frac{1}{3|\wt{H}|}$, we rewrite it as:
$$
\frac{1}{|\wt{H}|} \cdot \left (2 ||\wt{H}||  - \frac{1}{3} \cdot \sum_{i = 2}^{\mxc} \Big( 8 \cdot \frac{\mxc-i+1}{i\,\mxc} - \frac{\nu_i}{i}\Big)\,i\,\gamma_i(\wt{H}) \right) \geqslant \frac{8 \cdot \frac{\mxc+1}{\mxc} - 2}{3} . 
$$
The above inequality has the following simple interpretation: the average load of a vertex should be at least the right hand side, where the load of vertex $v$ is redefined as
$$
\ell'_{\wt{H}}(v) := d_{\wt{H}}(v) - \frac{1}{3} \cdot \sum_{i = 2}^{\mxc} \Big(8 \cdot \frac{\mxc-i+1}{i\,\mxc} - \frac{\nu_i}{i}\Big)\,\gamma_i(\wt{H}, v).
$$
Recalling that $\mxc = 5$, $\nu_i = 0$ for $i \neq 3$ and $\nu_3 = \frac{3}{5}$, we get that the right hand side is equal to $\frac{38}{15}$ and the load is as follows:
\begin{eqnarray*}
\ell'_{\wt{H}}(v) &= &d_{\wt{H}}(v) 
- \frac{1}{3} \cdot \Big(8 \cdot \frac{5-2+1}{2 \cdot 5} \Big)\,\gamma_2(\wt{H}, v)
- \frac{1}{3} \cdot \Big(8 \cdot \frac{5-3+1}{3 \cdot 5} - \frac{1}{5}\Big)\,\gamma_3(\wt{H}, v)\\
&&- \frac{1}{3} \cdot \Big(8 \cdot \frac{5-4+1}{4 \cdot 5} \Big)\,\gamma_4(\wt{H}, v)
- \frac{1}{3} \cdot \Big(8 \cdot \frac{5-2+1}{5 \cdot 5} \Big)\,\gamma_5(\wt{H}, v)\\
&= &d_{\wt{H}}(v) 
- \frac{16}{15}\,\gamma_2(\wt{H}, v)
- \frac{7}{15}\,\gamma_3(\wt{H}, v)
- \frac{4}{15}\,\gamma_4(\wt{H}, v)
- \frac{8}{75}\,\gamma_5(\wt{H}, v)
\end{eqnarray*}
As $\wt{H}$ is reduced, the minimum load of a vertex is $3-\frac{7}{15} = \frac{38}{15}$ (think of a vertex incident to three edges, two of which determine a cycle of length $3$). 
As a consequence, the average load of a vertex is at least the right hand side, 
and Inequality~\eqref{part1} follows.

\medskip
Let us now prove Inequality~(\ref{part2}). Since $\mxc=5$, $\nu_i = 0$ for $i \neq 3$ and $\nu_3 = \frac{3}{5}$, the inequality reads $|\delta(Y)| + ||\wt{H}-Y|| - ||F|| \geqslant \frac{4}{5} |Y| + 2k + \frac{3}{5} (\xi_3^0 - \gamma_3(F))$. We will prove the following slightly stronger inequality:
\begin{equation} \label{toprove}
	|\delta(Y)| + ||\wt{H}-Y|| - ||F|| \geqslant  |Y| + 2k + \frac{3}{5} (\xi_3^0 - \gamma_3(F)).
\end{equation}
To prove it, we use arguments that are similar to those used in \cite{cghw98}. First observe that the quantities $||\wt{H}-Y|| - ||F||$ and $\xi_3^0 - \gamma_3(F)$ respectively correspond to the number of extra edges and the number of extra triangles contained in $\wt{H} - Y$. Furthermore, each extra edge can generate at most one extra triangle, since $\wt{H}$ has no diamond with at most $9$ edges.

We build a bipartite graph $J$ as follows. Start with $\wt{H}$, and contract each of the $k$ components of $\wt{H}-Y$ into a single vertex (as usual, we keep the newly created parallel edges, if any,
and we remove the loops). Then, remove all edges having both endpoints in $Y$. 

It follows from the fact that $X$ is a minimal hitting set of $H$ that, for each vertex $v \in Y$, there exists a diamond $D_v$ in $\wt{H}$ that is vertex-disjoint from $Y-\{v\}$ and edge-disjoint from $E(\wt{H}-Y) - E(F)$. Moreover, for each extra edge $e \in E(\wt{H}-Y) - E(F)$, there exists a diamond $D_e$ in $\wt{H}$ that is vertex-disjoint from $Y$ and edge-disjoint from $E(\wt{H}-Y) - E(F) - \{e\}$. We call these diamonds a {\DEF witness} for $v$ and $e$, respectively. 

In particular, for every $v\in Y$, we can choose a component $K$ of $\wt{H} - Y$ such that there are at least two edges between $v$ and $K$ in $J$; we call the pair $(v, K)$ a {\DEF primary pair}. We remove from $J$ one edge between $v$ and $K$ for each primary pair $(v, K)$.
Noticed that we removed exactly $|Y|$ edges from $J$; thus, 
to prove Inequality~(\ref{toprove}), it is enough to show that
\begin{equation}
\label{eq:intermediate_claim}
||J|| + ||\wt{H}-Y|| - ||F|| \geqslant 2k + \frac{3}{5} (\xi_3^0 - \gamma_3(F)).
\end{equation}

In the remainder of this proof, our aim is to show that, for each component $K$ of $\wt{H}-Y$, the number of edges of $J$ incident to $K$ plus the number of extra edges in $K$ is at least $2$ plus $3/5$ times the number of extra triangles in $K$, that is:
\begin{equation}
\label{eq:final_claim}
d_J(K) + ||K|| - ||K \cap F|| \geqslant 2 + \frac{3}{5}(\gamma_3(K) - \gamma_3(K \cap F)).
\end{equation}
Clearly, Inequality~\eqref{eq:final_claim} implies Inequality~\eqref{eq:intermediate_claim}. 

First, we recall that, since $\tilde{H}$ is reduced, 
every vertex of $\tilde{H}$ has at least three neighbors.
Consider a component $K$ of $\wt{H}-Y$. Let $\eta = \eta(K) := ||K|| - ||K \cap F||$ denote the number of extra edges in $K$, and $\tau = \tau(K) := \gamma_3(K) - \gamma_3(K \cap F)$ denote the number of extra triangles in $K$. 
With these notations, Inequality~\eqref{eq:final_claim} becomes
\begin{equation}
\label{eq:final_claim_restated}
d_J(K) + \eta \geqslant 2 + \frac{3}{5}\tau.
\end{equation}

Note that $\eta \geqslant \tau$, since no two extra triangles have an edge in common.
Therefore, we may assume that one of the two following cases occurs 
(otherwise \eqref{eq:final_claim_restated} holds):
\begin{enumerate}[(i)]
\item $d_J(K) = 0$ and $\eta \leqslant 4$, or
\item $d_J(K) = 1$ and $\eta \leqslant 2$. 
\end{enumerate} 

Both cases can be handled using similar arguments, hence we treat them
in parallel and only highlight the differences when necessary.

In case (i), $K$ is a component of $\wt{H}$, and hence every vertex of $K$
has at least three distinct neighbors in $K$. In case (ii), 
there are vertices in $V(K)$ having a neighbor in $Y$ in the graph $\wt{H}$;
these vertices are said to be {\DEF special}. 
Note that $K$ belongs to at most one primary pair $(v, K)$, because otherwise
we would have $d_{J}(K) \geqslant 2$. It follows that there are at most two special vertices
in $K$. Moreover, in $K$, every non-special vertex has at least three distinct neighbors,
while every special vertex has at least two of them. 

Let $L := K \cap F$. Each extra edge in $K$ has a witness in $\wt{H}$, and this witness 
avoids all the other extra edges and the vertices in $Y$, thus $L$ is a component of $F$, and hence
$L$ is a cactus.

First, we prove that every cycle in $L$ has at most $7$ vertices. 
Arguing by contradiction, assume $C$ is a cycle of $L$ with $|C| \geqslant 8$.
This cycle is a block of $L$, and each vertex of $C$ either has degree $2$ in $L$, 
or is a cutvertex of $L$. Let $t$ be the number of vertices of $C$ having degree $2$ in $L$.
Now, for each cutvertex $v$ of $L$ included in $C$, there is at least one
endblock $B_{v}$ of $L$ such that $v$ separates $B_{v}$ from $C$ in $L$.  
Observe that either $|B_{v}| = 2$, and thus one of its vertices has only one neighbor in $L$,
or $|B_{v}| \geqslant 3$  and $B_{v}$ is a simple cycle, and 
thus at least two of its vertices have degree $2$ in $L$.
This directly gives the following lower bound on $\eta$: 
$$
\eta \geqslant
\left\{
\begin{array}{lcl}
t/2 + (|C| - t)  &   &  \text{in case (i)}, \\
t/2 + (|C| - t) - 1 &   &  \text{in case (ii)}.
\end{array}
\right.
$$
(The $-1$ in case (ii) comes from the existence of special vertices in $K$.)
In case (i), since $\eta \leqslant 4$, we must have $\eta=4$, $|C|=t=8$, and $L=C$.
However, by taking the union of $C$ with an extra edge of $K$ we obtain a diamond in $K$ having $9$ edges, 
a contradiction.
In case (ii), we directly get a contradiction, because $\eta \leqslant 2$
and $t/2 + (|C| - t) - 1 \geqslant 3$. (The last inequality is derived using
$t \leqslant |C|$ and $|C| \geqslant 8$.)

Consider an extra triangle in $K$. If there is only one extra edge $e$ in that triangle,
then the unique cycle $C$ in $D_{e} - e$ is a cycle of $L$ which has at least one
edge in common with the triangle.
(Recall that $D_{e}$ denotes the witness of $e$ in $K$.) 
As we have seen, we must have $|C| \leqslant 7$. However, this implies 
$||D_{e}|| \leqslant 9$, contradicting the fact that every diamond in $\wt{H}$
has at least $10$ edges.
Thus, every extra triangle contains at least two extra edges. 
Since the extra triangles are edge disjoint, this implies
\begin{equation}
\label{eq:eta_tau}
\eta \geqslant 2\tau.
\end{equation}

In case (ii), Inequality~\eqref{eq:eta_tau} allows us to easily 
show that~\eqref{eq:final_claim_restated} holds: First, suppose $\tau \geqslant 1$.
Then, by~\eqref{eq:eta_tau}, we have 
$$
\eta \geqslant 2\tau \geqslant 1 + \frac{3}{5}\tau,
$$
as desired. Now, assume $\tau = 0$. Here, we only need to show $\eta \geqslant 1$.
Suppose the contrary, that is, $\eta = 0$. Then $K = L$, and hence $K$ is a cactus.
Recall that $K$ has at most two special vertices, and that each of these vertices has
at least two neighbors in $K$. Thus, if some endblock $B$ of $K$ is isomorphic
to $K_{2}$ or to a cycle consisting of two parallel edges, then some vertex of $B$
has only one neighbor in $K$, a contradiction. This implies that 
all the endblocks of $K$ are simple cycles.
Using this observation, it is easily seen that $\eta \geqslant 1$ 
if there are more than one endblock in $K$. 
Thus, $K$ consists of a single block, and hence $K$ is a simple cycle. However, since $|K| \geqslant 3$, 
there is a non-special vertex in $K$ with degree $2$, a contradiction.
Therefore, we must have $\eta \geqslant 1$, which concludes case (ii).

Now, consider case (i).
Here, it is easily seen that $\eta \geqslant 2$: First, $|L| = |K| \geqslant 4$, since
every vertex of $K$ has at least three distinct neighbors in $K$. Thus, if $L$ has only one endblock,
then $L$ must be a simple cycle on at least $4$ vertices, implying $\eta \geqslant  2$. Similarly,
if $L$ has at least two endblocks, then by considering 
two such endblocks we deduce again $\eta \geqslant 2$.
This implies that Inequality \eqref{eq:final_claim_restated} holds if $\tau=0$. Hence,
we may assume $\tau \geqslant 1$. 

If $\eta \geqslant 3$, then \eqref{eq:eta_tau} gives
$$
2 + \frac{3}{5}\tau \leqslant 2 + \frac{3}{10}\eta 
= \eta + 2 - \frac{7}{10}\eta 
= \eta - \frac{1}{10} 
\leqslant \eta.
$$
Thus, it remains to handle the case where
$\eta=2$ and $\tau=1$. We will show that this case cannot happen, because it
leads to a contradiction.

Let $e_{1}, e_{2}, f$ denote the three edges forming the unique extra triangle in $K$,
with $e_{1}$ and $e_{2}$ being the two extra edges of $K$. 
These two edges are incident to a common vertex, which we denote $v$.
Observe that this vertex must be contained in some 
endblock $B_{v}$ of $L$ (otherwise, $\eta \geqslant 3$).

Let $B_{f}$ be the block of $L$ including the edge $f$. 
Suppose $B_{f}$ is a cycle. This cycle has length at most $7$.
If $v \in V(B_{f})$, then each extra edge gives a diamond with at most $8$ edges in $K$.
Similarly, if $v \notin V(B_{f})$, then using the two extra edges we obtain a diamond
in $K$ with at most $9$ edges. Thus, in both cases we get
a contradiction. This implies $B_{f} \cong K_{2}$. In particular, 
$B_{f} \neq B_{v}$, since otherwise we would have $|B_{f}| \geqslant 3$.

Now, $B_{f}$ cannot be an endblock of $L$, as otherwise
some vertex of $B_{f}$ would have degree only $2$ in $K$. 
This implies that there is an endblock of $L$ distinct from $B_{v}$ that contains a vertex
having at most two neighbors in $K$, a contradiction.
This concludes the proof.
\end{proof}

\begin{lemma}
\label{lem:heart_of_const}
For the minimal hitting set $X$ output by Algorithm~\ref{algo:constant}, we have
$$
\sum_{v \in X} a_{i,v} \leqslant 9 \, \beta_i
$$
for all $i \in \{1,\ldots,k\}$.
\end{lemma}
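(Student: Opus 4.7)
The plan is to handle each iteration $i \in \{1,\ldots,k\}$ separately, splitting on which kind of inequality was added at that iteration. As in the proof of Lemma~\ref{lem:heart_of_log(n)}, the reverse delete step guarantees that every vertex $u \in X$ has a witness diamond $D_u \subseteq G$ with $V(D_u) \cap X = \{u\}$, and the shaving argument shows that $D_u$ avoids all vertices removed when shaving $G - Z$ at iteration $i$ (where $Z$ is the content of $X$ at the start of that iteration). Hence $X' := X \cap V(H)$ is a minimal hitting set of $H$, which is exactly what is needed in order to invoke Claim~\ref{claim:pieceX}, Claim~\ref{claim:piece_contrib}, and Lemma~\ref{sparsity_bound}.

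\emph{Case A (blended diamond inequality).} Here $\beta_i = 1$, and the support graph $S$ has one of the four types from Lemma~\ref{lem:diamond-types}, arising from a diamond $\wt D$ with $||\wt D|| \leqslant 2\mxc - 1 = 9$. Types~3 and~4 give $\sum_{v \in X} a_{i,v} \leqslant 4 \leqslant 9$ directly. For type~1, the $p \leqslant ||\wt D||$ pieces of $S$ are all simple paths, so no piece contains a cycle and no special-cycle adjustment is needed; Claim~\ref{claim:pieceX} combined with Lemma~\ref{lem:simple_charging} applied to the $2$-connected ``structure graph'' of $S$ (branch vertices as vertices, pieces as edges) yields $\sum_{v \in X} a_{i,v} \leqslant p \leqslant 9$.

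The main obstacle lies in type~2, where we need $p \leqslant 8$ rather than the a priori bound $p \leqslant 9$. I would exploit the fact that a necklace necessarily contains at least one double piece: following the proof of Lemma~\ref{lem:diamond-types} (the case $\mu = 2$), the corresponding induced subgraph $K' \subseteq \wt H$ is a cycle of length $p$ with at least one edge replaced by a pair of parallel edges, so $||K'|| \geqslant p + 1$; together with $||K'|| \leqslant ||\wt D|| \leqslant 9$ this gives $p \leqslant 8$ (a similar accounting handles the $\mu \geqslant 3$ subcase, where $S$ arises from a twin pair of edges in $K'$). By Claim~\ref{claim:piece_contrib}, each piece of $S$ contributes at most $1$ to $\sum_{v \in X} a_{i,v}$ from its internal vertices, with the unique piece containing the special cycle contributing at most $2$; applying Lemma~\ref{lem:simple_charging} to the necklace's structure cycle then yields $\sum_{v \in X} a_{i,v} \leqslant p + 1 \leqslant 9$.

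\emph{Case B (extended sparsity inequality).} Here the algorithm used this inequality because $\wt H$ contains no diamond of size at most $2\mxc - 1 = 9$, so every diamond of $\wt H$ has at least $2\mxc = 10$ edges, matching the hypothesis of Lemma~\ref{sparsity_bound}. Since $X' = X \cap V(H)$ is a minimal hitting set of $H$, Lemma~\ref{sparsity_bound} directly gives $\sum_{v \in X'} a_{i,v} \leqslant 8 \beta_i$. Because the extended sparsity inequality has support contained in $V(H)$, this equals $\sum_{v \in X} a_{i,v}$, and hence $\sum_{v \in X} a_{i,v} \leqslant 8 \beta_i \leqslant 9 \beta_i$. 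Combining Cases~A and~B establishes the lemma.
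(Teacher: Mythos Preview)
Your proposal is correct and follows essentially the same approach as the paper's proof: split on whether the $i$th inequality is an extended sparsity inequality (invoke Lemma~\ref{sparsity_bound}) or a blended diamond inequality (bound the number of pieces of $S$ and reuse the piece-by-piece analysis of Lemma~\ref{lem:heart_of_log(n)}). You are in fact more explicit than the paper in two places---justifying that $X\cap V(H)$ is a minimal hitting set of $H$, and arguing why a type~2 support graph has at most $8$ pieces---both of which the paper asserts without elaboration.
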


\begin{proof}
If the $i$th inequality of the working LP relaxation is an extended sparsity inequality, the result follows from the previous lemma. Thus, we may assume that the $i$th inequality is deduced from a diamond $\wt{D}$ (thus $\beta_i = 1$). Hence, $\wt{D}$ has at most $2q-1 = 9$ edges. Let $S$ denote the support graph of the $i$th inequality. If $S$ is of type 1, then it has at most $9$ pieces. If it is of type 2, it has at most $8$ pieces. If it is of type 3 or 4, the left hand side is at most two. The result then follows from the arguments used in the proof of Lemma \ref{lem:heart_of_log(n)}.
\end{proof}

Our main result directly follows from Lemma \ref{lem:heart_of_const}. The proof is identical to the proof of Theorem \ref{Ologn} and hence is omitted.

\begin{theorem}
Algorithm~\ref{algo:constant} is a 9-approximation for the diamond hitting set problem.
\end{theorem}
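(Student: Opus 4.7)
The plan is to follow the template of Theorem~\ref{Ologn} essentially verbatim, with the logarithmic factor replaced by the constant $9$. The engine is the same primal-dual accounting: each vertex $v$ added to $X$ is tight at the moment it joins, and since dual variables are only ever increased (never decreased), tightness is preserved until termination. Hence for every $v \in X$ we have $c_v = \sum_{i=1}^{k} a_{i,v}\, y_i$, and swapping the order of summation yields
\[
\sum_{v \in X} c_v \;=\; \sum_{v \in X} \sum_{i=1}^{k} a_{i,v}\, y_i \;=\; \sum_{i=1}^{k} \Bigl( \sum_{v \in X} a_{i,v} \Bigr) y_i.
\]

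Next, I would plug in Lemma~\ref{lem:heart_of_const}, which bounds the inner sum by $9\,\beta_i$ for each iteration $i$, to obtain
\[
\sum_{v \in X} c_v \;\leqslant\; 9 \sum_{i=1}^{k} \beta_i\, y_i.
\]
It then remains to argue $\sum_{i=1}^{k} \beta_i\, y_i \leqslant OPT$. Since every inequality added to the working LP is either a blended diamond inequality (valid by the lemma in Section~\ref{def_logn_compute_ieq}) or an extended sparsity inequality (valid by the lemma in Section~\ref{lifted}), each such inequality is implied by the diamond and non-negativity inequalities; consequently any integral hitting set is feasible for the working LP, so its optimum is at most $OPT$. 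By weak LP duality applied to (LP)/(D), the feasible dual objective value $\sum_i \beta_i y_i$ is at most this LP optimum, hence at most $OPT$. Chaining these bounds gives $\sum_{v \in X} c_v \leqslant 9\, OPT$.

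To close, I would observe that Algorithm~\ref{algo:constant} runs in polynomial time: the collision-tracking argument at the end of Section~\ref{collisions} (which applies verbatim to the constant-factor algorithm since its support graphs have the same structure) bounds the total number of iterations polynomially, and all other steps per iteration (shaving, bond-reduction, finding a small diamond or certifying its absence, constructing the support graph $S$, updating $\mc{L}$, the reverse delete, and raising $y_i$) are explicitly polynomial. Since there is no genuine obstacle here beyond the already-established Lemma~\ref{lem:heart_of_const}, this is precisely why the authors omit the details: the argument is a one-line primal-dual computation combined with validity of the working-LP inequalities.
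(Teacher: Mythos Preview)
Your proof is correct and matches the paper's (omitted) argument, which simply replays the proof of Theorem~\ref{Ologn} with $\alpha=9$ and Lemma~\ref{lem:heart_of_const} in place of Lemma~\ref{lem:heart_of_log(n)}. One small correction: the extended sparsity inequalities are \emph{not} implied by the diamond and non-negativity inequalities in the LP sense (were they, the $\Theta(\log n)$ integrality gap of Section~\ref{def_logn_gap} would already be constant); the lemma in Section~\ref{lifted} instead proves their validity directly for every integral hitting set, and this is all you need to conclude that the optimal hitting set is feasible for the working LP and hence $\sum_{i} \beta_i y_i \leqslant OPT$.
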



\section*{Acknowledgements}

We thank Dirk Oliver Theis for his valuable input in the early stage of this research. We also thank Jean Cardinal and Marcin Kami\'nski for stimulating discussions.

\bibliography{diamondVScactus}
\bibliographystyle{plain}

\end{document}